\theoremstyle{theorem}
\newtheorem{teo}{Theorem}[section]
\newtheorem{lemma}[teo]{Lemma}
\newtheorem{cor}[teo]{Corollary}
\newtheorem{conj}[teo]{Conjecture}
\newtheorem{opbm}[teo]{Open Problem}
\newtheorem{prop}[teo]{Proposition}
\newtheorem*{teo*}{Theorem}
\theoremstyle{definition}
\newtheorem{defi}[teo]{Definition}
\theoremstyle{remark}
\newtheorem{oss}[teo]{Remark}
\newtheorem{ese}[teo]{Example}
\newcommand{\A}{\mathcal{A}}
\newcommand{\M}{\text{M}}
\newcommand{\Syn}{\text{Syn}}
\newcommand{\matn}{\mathbb{M}_{n-1}(\mathbb{C})}
\title{Simplicity and irreducibility in circular automata}
\author{
 Riccardo Venturi%
\footnote{Universidade NOVA de Lisboa, Lisbon, Portugal.
\texttt{r.venturi@campus.fct.unl.pt}}
}
\begin{document}

\maketitle
	
\begin{abstract}
This paper investigates the conditions under which a given circular (synchronizing) DFA is \emph{simple} (sometimes referred to as \emph{primitive}) and when it is \emph{irreducible}. Our notion of irreducibility slightly differs from the classical one, since we are considering our monoid representations to be over $\mathbb{C}$ instead of $\mathbb{Q}$; nevertheless, several well-known results remain valid—for instance, the fact that every irreducible automaton is necessarily simple. We provide a complete characterization of simplicity in the circular case by means of the \emph{weak contracting property}. Furthermore, we establish necessary and sufficient conditions for a circular \emph{contracting automaton} (a stronger condition than the weakly contracting one) to be irreducible, and we present examples illustrating our results.
\end{abstract}

\noindent\textbf{2020 Mathematics Subject Classification.} 68Q70, 68Q45, 20M20, 20M30, 05E10. \\
\noindent\textbf{Keywords.} Circular automaton, simple automaton, irreducible representation, transformation monoid, synchronizing automaton, circulant matrix, contracting property.

\section{Introduction}

A complete deterministic automaton is an action of a free monoid $\Sigma^*$ (where $\Sigma$ is a finite set) on a finite set $Q$. More concretely, it can be described as a tuple $\mathcal{A} = (Q, \Sigma, \delta)$, where $Q$ is a finite set of \textit{states}, $\Sigma$ is a finite \textit{alphabet}, and $\delta: Q \times \Sigma \to Q$ is called the \textit{transition function}. This function fully describes the action of the alphabet $\Sigma$ on the state set $Q$ and extends naturally to words in $\Sigma^*$: nevertheless, rather than the functional notation, we shall adopt an action notation, writing $q \cdot x$ in place of $\delta(q,x)$. This action extends naturally, first to $\Sigma^*$, and then to any subset $F \subseteq Q$, in the obvious way by setting for any $u\in\Sigma^*$ 
\[ 
F \cdot u = \{ q \cdot u \mid q \in F \}.
\]
Unless specified, $n$ will always indicate the number of states $|Q|$, and for any word $u\in\Sigma^*$ we define the \textit{rank of} $u$ as $\text{rk}(u):=|Q\cdot u|$. From a graphic point of view, an automaton can be viewed as a directed, edge-labelled graph in which each vertex has exactly one outgoing edge labelled by each $a \in \Sigma$. A complete deterministic automaton is here referred to simply as an \textbf{automaton}, or sometimes as a \textbf{DFA}, and is one of the main theoretical tools in theoretical computer science used to recognize languages when an initial state and a set of final states are specified. However, our focus is on automata from a purely algebraic perspective. We will always assume our automata to be \textit{synchronizing}: an automaton is called synchronizing (or \textbf{reset}) if there exists a word $w \in \Sigma^{*}$, referred to as a \emph{synchronizing} (or \emph{reset}) word, that maps all states to a single, identical state. Formally, for a synchronizing word $w$, we have $q \cdot w = q' \cdot w$ for all states $q,q' \in Q$: it is well known that the set of synchronizing words $\Syn(\A)$ is a two-sided ideal of $\Sigma^*$. 
Given an automaton $\A = (Q,\Sigma,\delta)$, we define a \emph{congruence} to be an equivalence relation $\sigma \subseteq Q \times Q$ which is compatible with the action, that is, for every $u \in \Sigma^*$ and $p,q \in Q$, if $p \,\sigma\, q$ then $(p \cdot u) \,\sigma\, (q \cdot u)$. Let $p,q\in Q$ with $p\neq q$: we can define the \emph{congruence generated by} $\{p,q\}$, denoted $\langle \{p,q\}\rangle$, as the smallest congruence containing $(p,q)$. Another way of constructing the congruence generated by two states is considering the action $(p,q)\cdot \Sigma^*$, and then closing by reflexivity and transitivity. 
It is immediate that the diagonal and universal relations on $Q$, namely $\Delta_\A$ and $\nabla_\A$, are congruences of $\A$: if the only congruences of the given automaton are the latter two, we say that the automaton is \textit{simple} (also known in literature as \textit{primitive}, see \cite{RystIrr}). Every automaton we are considering throughout this paper is simple. A notable infinite class of simple automata is given by the \textit{\v{C}ern\'y automata} (see \cite{Vo_Survey} for their definition and, for instance, \cite{AlRo} for a proof of their simplicity). In particular, the first part of this work is dedicated to generalize the proof given in \cite{AlRo} of the simplicity of the \v{C}ern\'y automata to obtain a charecterization of simplicity for the whole class of \textit{circular} automata. We move then to the study of the \textit{irreducibility} for circular automata: it is well know that the class of irreducible automata lies in the simple one (see \cite{RystIrr}). Our goal will be to provide necessary and sufficient conditions for a circular (simple) automaton to be irreducible. Although we do not provide a complete characterization, we are able to establish several strong conditions, whose consistency is supported by explicit examples. We also address an implicit question raised in \cite{RodVen} concerning the existence of simple but non-irreducible automata: Example~\ref{ese:simp non irr} presents an infinite family of such automata, thereby settling the question. \\
One of the main motivations for this work is the belief that studying the structure of irreducible automata may provide new insights toward the solution of the \textit{Černý conjecture} (see \cite{Vo_Survey} for a comprehensive overview of the problem). The conjecture states that, for any synchronizing automaton $\A$, there exists a reset word $u\in\Sigma^*$ of length $|u|\leq (n-1)^2$. As shown in \cite{RodVen}, the class of simple automata is one of the three known classes for which a solution in these particular cases would imply a general solution. We believe that this reduction can be further refined by extending the analysis from the simple case to the irreducible one, since every known \textit{extremal} automaton (i.e., an automaton whose shortest reset word exactly reaches the bound) is irreducible. This brings us to state the following conjecture:

\begin{conj}\label{conj: extr are irred}
    Let $\A$ be an extremal synchronizing automaton. Then $\A$ is irreducible. 
\end{conj}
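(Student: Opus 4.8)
The plan is to argue by contraposition: assuming $\A$ is \emph{not} irreducible, I would aim to produce a reset word of length strictly below $(n-1)^2$, contradicting extremality. The starting point is a linear-algebraic reformulation of synchronization. Writing each letter $a\in\Sigma$ as its (row-stochastic, $\{0,1\}$) transformation matrix $M_a$ acting on $\mathbb{C}^n$, the all-ones vector $\mathbf{1}$ spans a canonical invariant line, and on the quotient space $V=\mathbb{C}^n/\langle\mathbf{1}\rangle$ of dimension $n-1$ we obtain a representation $\rho\colon\Sigma^*\to\matn$. A short computation, using that distinct point-preimages are disjoint, shows that a word $w$ is a reset word if and only if $\rho(w)=0$; thus the shortest reset word is precisely the shortest product of generators $\rho(a)$ equal to the zero matrix, and extremality asserts that this length equals $(n-1)^2$.

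With this dictionary, Burnside's theorem makes irreducibility of $\A$ equivalent to $\rho(\Sigma^*)$ linearly spanning the whole of $\matn$. If $\A$ is reducible there is a proper nonzero $\rho(\Sigma^*)$-invariant subspace $W\subseteq V$, so in a suitable basis every $\rho(a)$ is block upper-triangular with diagonal blocks $A_a$ of size $d$ and $D_a$ of size $n-1-d$, where $0<d<n-1$. The equation $\rho(w)=0$ then decouples into three conditions: the product of the $A_a$ must vanish, the product of the $D_a$ must vanish, and the off-diagonal coupling term must vanish. The hope is to zero the two diagonal blocks \enquote{independently} and cheaply, at a combined cost on the order of $d^2+(n-1-d)^2$, which for $0<d<n-1$ is strictly smaller than $(n-1)^2$.

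The main obstacle---and, I suspect, the reason the statement remains a conjecture---lives in exactly that last step. The blocks $A_a$ and $D_a$ define honest sub- and quotient representations in which $0$ is reachable, but they are no longer transformation matrices: they are arbitrary complex matrices that have lost the $\{0,1\}$, row-stochastic structure underpinning every \v{C}ern\'y-type length bound. Bounding the length of the shortest product equal to $0$ in an abstract matrix semigroup purely in terms of its dimension is in general hopeless, so the block-triangular reduction alone cannot yield a sub-extremal reset word. A more promising refinement is to retain the combinatorial structure throughout, showing that reducibility of a \emph{synchronizing} automaton forces a concrete combinatorial feature---a proper set of states that is merged or stabilized unusually quickly---from which a short reset word can be assembled directly.

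Given the tools of this paper, a realistic partial attack routes through the circular case. Since the known extremal family, the \v{C}ern\'y automata, is circular, I would first specialize the conjecture to circular automata, where the earlier necessary and sufficient conditions for the irreducibility of a contracting automaton become available; the task then reduces to verifying that an extremal circular automaton is contracting and meets those conditions. The remaining---and again genuinely hard---gap is to prove that \emph{every} extremal automaton is circular, or at least inherits enough circulant structure for the argument to apply, a structural classification that is open beyond small state counts.
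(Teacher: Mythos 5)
The statement you are trying to prove is stated in the paper as a \emph{conjecture}: the paper offers no proof of it, only (in the concluding section) a proposed reduction. So your text has to be judged as a proposed line of attack, and to your credit you are candid that it is one. Your linear-algebraic dictionary is correct: on $\omega^\perp$ a word is a reset word iff its image under the synchronized representation is the zero matrix, and by Burnside irreducibility over $\mathbb{C}$ is equivalent to the image spanning all of $\matn$. The block-triangularization step is also sound, and in fact the off-diagonal coupling is not the obstacle you fear: if $w_1$ kills the block $A$ and $w_2$ kills the block $D$, then $\rho(w_1w_2)=0$ automatically, since $A_{w_1}A_{w_2}=0$, $D_{w_1}D_{w_2}=0$, and the coupling term $A_{w_1}B_{w_2}+B_{w_1}D_{w_2}$ vanishes because its two factors $A_{w_1}$ and $D_{w_2}$ are zero. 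The genuine gap is exactly the one you then name: the diagonal blocks generate semigroups of arbitrary complex matrices with no row-stochastic or $\{0,1\}$ structure, and there is no bound, in terms of dimension alone, on the length of a shortest zero product in such a semigroup; nor is there any reason the shortest word killing one block should be shorter than the shortest word killing both. So the estimate ``$d^2+(n-1-d)^2$'' has no support, and the contrapositive cannot be closed. Your fallback --- reduce to the circular case and show every extremal automaton is circular --- replaces one open classification problem with another.

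The paper's own proposed route is quite different and worth contrasting with yours. Rather than trying to extract a short reset word from reducibility, it aims to verify irreducibility of extremal automata directly via Theorem~\ref{teo: Rystov}: a simple, weakly defective automaton is irreducible. This reduces Conjecture~\ref{conj: extr are irred} to two structural claims about extremal automata --- that they are simple (Conjecture~7.21 of \cite{RodVen}) and that every letter has rank at least $n-1$ (the paper's final conjecture) --- both of which hold for all known extremal examples and neither of which requires any length counting in an abstract matrix semigroup. The paper's Theorem~\ref{teo: bound red 1dim} does supply a fragment of the ``reducible $\Rightarrow$ fast synchronization'' phenomenon you are reaching for, but only for circular automata whose invariant subspace is one-dimensional, and it bounds ranks of deficient words rather than reset-word length. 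In short: your approach is a reasonable first instinct, you correctly diagnose why it stalls, but it is not a proof, and it is not the reduction the paper itself envisions.
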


This conjecture is motivated by a class of examples that we are able to provide—namely, Example~\ref{ese:simp non irr}. In this case, we present an infinite family of simple but non-irreducible automata that can be synchronized by words of linear length in $n$. \\
Another motivation arises from the context of representation theory for finite monoids (see \cite{Stein_Book}). The representations under consideration here are related to (Rees quotients of) submonoids of the full transformation monoid. This automata-theoretic perspective allows us to leverage relatively simple combinatorial structures (monoids) to investigate particular algebraic objects that are often more challenging to address within a purely algebraic setting. Another related algebraic topic that we are (implicitly) addressing here concerns the theory of primitive monoids (for a complete dissertation, see \cite{Steinb_monoid}). A transformation monoid $M$ acting on a set $X$ is primitive if its action leaves no non-trivial partition of $X$ invariant. In this work, the transformation monoids $M$ are precisely the transition monoids associated with finite automata, and their actions are defined by the transition function. The first part of this paper establishes a necessary and sufficient condition for the transition monoid of a circular automaton (basically a transition monoid that contains the full cycle permutation $(1, \ldots, n)$) to be primitive.

\section{Preliminaries}

We begin here with our basic notions: an automaton $\A=(Q,\Sigma,\delta)$ is said to be \textit{circular} if $\exists a\in\Sigma$ such that:
$$\forall p,q\in Q. \ \exists k\geq 0 \text{ such that } p \cdot a^k = q \ \text{ or } q\cdot a^k = p$$
the letter $a\in\Sigma$ mentioned above is said to be \textit{circulating letter}. A relevant class of automata we are considering here is given by the so-called \textit{weakly defective} automata: these are automata such that for every letter $x\in\Sigma$ we have $\text{rk}(x)\geq n-1$, and it is straightforward to see that a weakly defective automaton admits a word of rank $2$.
We now introduce the representation-theoretic tools used in this work; for a comprehensive treatment of the representation theory of finite monoids, see \cite{Stein_Book}. Given an automaton $\A$, we always have an epimorphism $\pi: \Sigma^* \rightarrow \M(\A)$, where $\M(\A)$ is the transition monoid associated to $\A$. For convenience, we sometimes drop $\pi$ and identify elements in $\Sigma^*$ with elements in $\M(\A)$ and vice versa; so for instance if we are in the context of $\M(\A)$, we will write $\Syn(\A)$ instead of $\pi(\Syn(\A))$. With this convention, we define $\A^*:=\M(\A)/\Syn(\A)$ where the quotient taken here is the \textit{Rees quotient} of $\M(\A)$ over $\Syn(\A)$ (see \cite{Howie}). Let us now introduce the following construction. Let $Q=\{q_1,\ldots,q_n\}$ be the set of states, and let $\mathbb{C}Q \cong \mathbb{C}^n$ denote the free $\mathbb{C}$-vector space generated by $Q$. Observe that we have an action of $\M(\A)$ over $\mathbb{C}Q$ given by:
$$v \cdot u = (v_1q_1+\ldots+v_nq_n)\cdot u = v_1q_1\cdot u+\ldots+v_nq_n\cdot u$$
where $v\in \mathbb{C}Q$ and  $u\in\M(\A)$. Let now $\omega:=q_1 +\ldots +q_n$: by fixing the base $\{q_i-q_1\}_{i=2}^n$ for the ortogonal subspace $\omega^\perp$, it is easy to check that one may restrict the action of $\M(\A)$ from $\mathbb{C}Q$ to its (proper) subspace $\omega^\perp\cong \mathbb{C}^{n-1}$, and one may also check that for any $v\in \omega^\perp$ we have $v\cdot u = 0 \iff v \in \Syn(\A)$, and this induces a representation
$$\rho: \M(\A)/\Syn(\A) \rightarrow \text{End}_\mathbb{C}(\omega^\perp)\cong \matn$$
and thus $\rho:\A^*\rightarrow \matn$; we will refer to such a representation as the \textit{synchronized representation}  of $\A$. We can now give the following:

\begin{defi}
Let $\A$ be a synchronizing automaton. We say that $\A$ is \textit{irreducible} (or \textit{$\mathbb{C}$-irreducible}, to emphasize the field under consideration) if the associated synchronized representation $\rho:\A^* \to \M_n(\mathbb{C})$ is irreducible.
\end{defi}

Note that this definition of irreducibility for synchronizing automata differs slightly from that given, for instance, in \cite{RystIrr}, where the vector spaces are considered over the field $\mathbb{Q}$. We refer to that notion as \textit{$\mathbb{Q}$-irreducibility}. Nevertheless, our notion is more general, and therefore, if $\A$ is $\mathbb{C}$-irreducible, it is necessarily $\mathbb{Q}$-irreducible. This allows us to invoke any result that assumes $\mathbb{Q}$-irreducibility. In particular, we provide an example (Example~\ref{ese: Q irr, C red}) of a synchronizing automaton that is $\mathbb{Q}$-irreducible but $\mathbb{C}$-reducible.One of the key results known for ($\mathbb{Q}$-)irreducible automata, which also extends to our setting, is the following:

\begin{teo}
    Let $\A$ be an irreducible automaton. Then $\A$ is simple.
\end{teo}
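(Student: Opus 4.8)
The plan is to argue by contraposition: I would assume that $\A$ is not simple and then construct a proper non-zero invariant subspace of $\omega^\perp$, contradicting irreducibility of the synchronized representation $\rho$. So suppose there exists a congruence $\sigma$ with $\Delta_\A \subsetneq \sigma \subsetneq \nabla_\A$. The natural bridge between congruences and subrepresentations is to associate to $\sigma$ the subspace
\[
V_\sigma := \operatorname{span}_{\mathbb{C}}\{\, q - q' \mid q,q'\in Q,\ q\,\sigma\,q' \,\} \subseteq \mathbb{C}Q,
\]
and the bulk of the proof consists in verifying that $V_\sigma$ is a genuine subrepresentation of the synchronized representation that is neither zero nor all of $\omega^\perp$.

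First I would check that $V_\sigma\subseteq\omega^\perp$: each generator $q-q'$ has coordinate sum zero, hence lies in $\omega^\perp$, and this property is preserved under linear combinations. Next I would verify invariance under the action of $\M(\A)$. This is exactly where the congruence hypothesis is used: if $q\,\sigma\,q'$ then for every $u\in\Sigma^*$ we have $(q\cdot u)\,\sigma\,(q'\cdot u)$, so $(q-q')\cdot u = q\cdot u - q'\cdot u$ is again a generator of $V_\sigma$. Because synchronizing words annihilate all of $\omega^\perp$ (recall $v\cdot u = 0$ whenever $u\in\Syn(\A)$), the action factors through $\A^* = \M(\A)/\Syn(\A)$ and $V_\sigma$ is stable under the induced $\A^*$-action; that is, $V_\sigma$ is $\rho$-invariant.

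It then remains to pin down the dimension of $V_\sigma$. If $\sigma$ has equivalence classes $C_1,\ldots,C_m$, then the differences within a class of size $k$ span a space of dimension $k-1$, and these contributions are independent across classes, so $\dim V_\sigma = \sum_{i=1}^m(|C_i|-1) = n-m$. Since $\sigma$ is non-trivial we have $2\le m\le n-1$, whence $1\le \dim V_\sigma \le n-2$; in particular $V_\sigma$ is non-zero and, sitting inside the $(n-1)$-dimensional space $\omega^\perp$, it is proper. This contradicts the irreducibility of $\rho$, completing the argument.

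The steps here are individually routine, so the proof is mostly a matter of assembling them in the right order; the one point deserving genuine care is the passage from invariance under $\M(\A)$ to invariance under $\A^*$, since one must confirm that the congruence-induced subspace is not collapsed or enlarged by quotienting out $\Syn(\A)$—this is precisely handled by the fact that synchronizing words act as the zero map on $\omega^\perp$, so that $0\in V_\sigma$ absorbs their image. The correspondence $\sigma\mapsto V_\sigma$ is the conceptual heart of the statement, and the dimension count is what guarantees that non-triviality of $\sigma$ translates into properness and non-vanishing of $V_\sigma$.
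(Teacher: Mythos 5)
Your proof is correct. The paper does not supply its own argument here---it simply cites Rystsov's theorem for the $\mathbb{Q}$-irreducible case and relies on the earlier remark that $\mathbb{C}$-irreducibility implies $\mathbb{Q}$-irreducibility---so there is nothing internal to compare against; your construction $\sigma\mapsto V_\sigma=\operatorname{span}_{\mathbb C}\{q-q'\mid q\,\sigma\,q'\}$ is exactly the standard congruence-to-invariant-subspace correspondence underlying that citation, and all three verifications (containment in $\omega^\perp$, $\A^*$-invariance via the fact that $\Syn(\A)$ acts as zero, and the dimension count $\dim V_\sigma=n-m$ giving $1\le\dim V_\sigma\le n-2$) are carried out correctly. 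A minor advantage of writing it out as you do is that the argument works directly over $\mathbb{C}$, with no need to pass through the $\mathbb{Q}$-irreducibility comparison.
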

\begin{proof}
    See \cite[Theorem~4]{RystIrr}.
\end{proof}

This shows that the class of irreducible automata is contained within the class of simple automata. As mentioned above, we next investigate the conditions under which a circular automaton $\A$ is irreducible. One of our main tools in this analysis will be the theory of \emph{circulant matrices} (see \cite{CircMat}). Recall that a circulant matrix is a complex matrix of the form:
\[\left(\begin{matrix}
    c_0      & c_{n-1} & \cdots  & c_2     & c_1     \\
c_1      & c_0     & c_{n-1} &         & c_2     \\
\vdots   & c_1     & c_0     & \ddots  & \vdots  \\
c_{n-2}  &         & \ddots  & \ddots  & c_{n-1} \\
c_{n-1}  & c_{n-2} & \cdots  & c_1     & c_0     \\
\end{matrix}\right)\]
and for any given circular automaton $\A$ equipped with its synchronizing representation $\rho$, we can always construct a circulant matrix whose rank provides information about the subspaces of $\omega^\perp$ that are invariant under the action of $\A^*$.

\section{Simplicity in circular automata}

In this section we give a combinatorial characterization of simplicity for circular automata. Throughout the whole paper, unless specified, we we only deal with (synchronizing) circular automata. Given a circular DFA $\mathcal{A}$ with $a\in\Sigma$ its circulating letter, we can define its \textit{relative metric} as follows: given $p,q\in Q$ let
$$d(p,q):=\min\{k\geq0 \ | \ p\cdot a^k = q  \ \text{ or } \  q\cdot a^k = p\}$$
observe that, for any $p,q\in Q$ with $p\neq q$ we have $1\leq d(p,q)\leq n/2$.

\begin{defi}
    A circular DFA $\mathcal{A}=(Q,\Sigma,\delta)$ is said to be \textit{contracting} if it satisfies the \textit{contracting property}, namely if $\forall p,q\in Q$ with $d(p,q)>1$, $\exists u \in \Sigma^*$ such that:
    $$1\leq d(p\cdot u, q\cdot u) < d(p,q)$$
\end{defi}

Observe that, one may restate the contracting property as follows: for every $p,q\in Q$ such that $d(p,q)>1$, there exists $u\in\Sigma^*$ such that $d(p\cdot u,q\cdot u)=1$. The equivalence can be proven by iterating the contracting property on two states and their images.

\begin{oss}\label{oss: d = min}    
    Let $\A$ be a circular automaton with $Q=\{q_1,\ldots,q_n\}$, $a\in\Sigma$ circulating letter such that $q_i\cdot a = q_{i+1 \mod{n}}$. Then, if $q_i,q_j\in Q$ with $i<j$ observe that the following equalities hold:
    $$q_i\cdot a^{j-i-1} = q_{j-1}\neq q_j \ \text{ and } \ q_j\cdot a^{n-j+i-1} = q_{n+i-1\mod{n}} = q_{i-1} \neq q_i$$
    which proves that $d(q_i,q_j)=\min\{j-i,n-j+i\}$.
\end{oss}

The main class of contracting automata is constituted by the well-known Černý automata, as established by the following proposition:

\begin{prop}
    The Černý automaton $\mathcal{C}_n$ is contracting for every $n\in \mathbb{N}$.
\end{prop}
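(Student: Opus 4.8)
The plan is to recall the definition of the \v{C}ern\'y automaton $\mathcal{C}_n$ and then verify the contracting property directly using the distance formula $d(q_i,q_j)=\min\{j-i,\,n-j+i\}$ established just above the statement. Recall that $\mathcal{C}_n$ has state set $Q=\{q_0,\ldots,q_{n-1}\}$ (or $\{q_1,\ldots,q_n\}$) and two letters: the circulating letter $a$ acting as the full cycle $q_i\cdot a=q_{i+1\bmod n}$, and a second letter $b$ that fixes every state except one, typically $q_0\cdot b=q_1$ (or the analogous ``near-identity'' reset-like letter) while $q_i\cdot b=q_i$ for $i\neq 0$. The key feature is that $b$ merges exactly one pair of adjacent states while leaving the rest untouched, and $a$ is a bijection that preserves all distances. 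So the whole argument reduces to showing that, by first rotating with $a$, any pair at distance $>1$ can be moved so that applying $b$ strictly decreases their distance.

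First I would fix two states $p=q_i$ and $q=q_j$ with $d(p,q)=k>1$, and observe that since $a$ is a bijection it preserves the relative metric, i.e. $d(p\cdot a^m,q\cdot a^m)=d(p,q)$ for all $m$. Hence no power of $a$ alone can contract; the contraction must come from $b$. Next I would use the rotational symmetry: because $a$ acts transitively and $b$ collapses the specific adjacent pair $\{q_0,q_1\}$ (the two states identified by $b$), I can choose a power $a^m$ that rotates the pair $\{p,q\}$ so that one of its two members lands on $q_0$ and the other stays off the merge. Concretely, after rotating I would arrange for $p\cdot a^m=q_0$ while $q\cdot a^m=q_{k}$ (the image being at distance $k$); then applying $b$ sends $q_0\mapsto q_1$ and fixes $q_k$ (since $k>1$ means $q_k\neq q_0$), giving two states at positions $q_1$ and $q_k$, whose distance is $k-1$. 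Thus with $u=a^m b$ we obtain $d(p\cdot u,q\cdot u)=k-1<k$, and since $k>1$ the resulting distance is at least $1$, satisfying the required inequality $1\leq d(p\cdot u,q\cdot u)<d(p,q)$.

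The main obstacle, and the step requiring care, is the bookkeeping of which endpoint of the pair to place on the merging state and verifying that the other endpoint is \emph{not} itself moved or collapsed by $b$. Because $d$ is defined as a minimum over both directions around the cycle, I must check both orientations: the pair $\{q_i,q_j\}$ realizes its distance $k$ either as $j-i$ or as $n-j+i$, and I should rotate so that the \emph{shorter} arc is the one whose lower endpoint sits at $q_0$, ensuring the merge shortens that shorter arc rather than inadvertently acting on the long side. A clean way to handle this is to note that by the remark following the definition of contracting it suffices to contract distance by any positive amount (not necessarily to $1$ in one shot), so I only need a single successful application of $b$ after an appropriate rotation; iterating then handles the general reduction to distance $1$. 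I would close by remarking that the same argument works uniformly for every $n$, since the cyclic action of $a$ together with the single-pair merge of $b$ is present in $\mathcal{C}_n$ for all $n\in\mathbb{N}$, which establishes that $\mathcal{C}_n$ is contracting.
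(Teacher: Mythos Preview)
Your proposal is correct and follows essentially the same approach as the paper: rotate with a suitable power of $a$ so that one endpoint of the pair sits on the unique state moved by $b$, then apply $b$ to shrink the distance by exactly~$1$, handling the two orientations of the short arc separately. The paper carries out precisely this computation (with the convention $q_n\cdot b=q_1$ rather than $q_0\cdot b=q_1$), writing out the two cases $d=j-i$ and $d=n-j+i$ explicitly and verifying that the new distance equals $d-1$.
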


\begin{proof}
    Let $p,q,p',q'\in Q$ such that $d(p,q) > 1$, $ d(p',q') =  1$. We know that $\mathcal{C}_n$ is double transitive for every $n\in\mathbb{N}$ (see \cite[Theorem 3.44]{RystovPhD}), meaning that we have a word $u\in\Sigma^*$ such that $p\cdot u = p'$, $q\cdot u = q'$ giving $d(p\cdot u,q\cdot u) = 1$.
\end{proof}

\begin{defi}
    Let $\A$ be a circulating automaton. We say that a couple $(p,q)\in Q\times Q$ is \textit{weakly contracting} if:
    $$\gcd\{(\gcd(d(p\cdot u,q\cdot u),\ n))\ | \  u\in\Sigma^* \ \text{ and } \ p\cdot u \neq q\cdot u\} = 1.$$
    An automaton is said to be \textit{weakly contracting} if every couple $(p,q)\in Q\times Q$ with $p\neq q$ is weakly contracting.
\end{defi}

\begin{oss}
    Let $\A$ be a synchronizing, circular automaton such that $|Q|=n$ is prime. Then $\A$ is weakly contractive.
\end{oss}

\begin{ese}[Non-weakly contracting, synchronizing automaton]
    Let $\A$ be the following automaton:
\begin{center}\begin{tikzpicture}[shorten >=1pt,node distance=3cm,on grid,auto] 
        \node[state] (q_1)   {$q_1$}; 
        \node[state] (q_2) [right=of q_1] {$q_2$}; 
        \node[state] (q_3) [below right =of q_2] {$q_3$}; 
        \node[state] (q_4) [below left=of q_3] {$q_4$};
        \node[state] (q_5) [left=of q_4] {$q_5$}; 
        \node[state] (q_6) [above left=of q_5] {$q_6$}; 
        \path[->] 
            (q_1) edge node {a,b} (q_2)
            (q_2) edge node {a} (q_3)
                  edge [loop right] node {b} ()
            (q_3) edge node {a} (q_4)
                  edge node {b}  (q_5)
            (q_4) edge node {a,b} (q_5)
            (q_5) edge node {a} (q_6)
                  edge node {b} (q_2)
            (q_6) edge node {a} (q_1)
                  edge node {b} (q_2);
\end{tikzpicture}\end{center}
and observe that it is synchronizing by means of the word $b^2$. Consider the couple $(q_1,q_4)$: it follows from an easy check that:
$$\text{for every }\ u\in\Sigma^* \ \text{ either } \ q_1\cdot u = q_4\cdot u \ \text{ or } \ d(q_1\cdot u,q_4\cdot u ) =3$$
showing us that $(q_1,q_4)$ is not weakly contractive.
\end{ese}

\begin{lemma}
    Let $\A$ be a contracting automaton. Then $\A$ is weakly contractive.
\end{lemma}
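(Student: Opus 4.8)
The plan is to reduce everything to a single observation: the quantity defining weak contractivity, namely
\[
\gcd\{\gcd(d(p\cdot u,q\cdot u),\ n)\ \mid\ u\in\Sigma^*,\ p\cdot u\neq q\cdot u\},
\]
equals $1$ as soon as the set over which the outer $\gcd$ is taken contains the value $1$. Since $\gcd(1,n)=1$ for every $n$, it therefore suffices, for each couple $(p,q)$ with $p\neq q$, to exhibit a \emph{single} word $u\in\Sigma^*$ with $p\cdot u\neq q\cdot u$ and $d(p\cdot u,q\cdot u)=1$. That single term contributes $\gcd(1,n)=1$ to the set, forcing the overall $\gcd$ to be $1$.

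First I would fix an arbitrary couple $(p,q)\in Q\times Q$ with $p\neq q$ and split into two cases according to the value of $d(p,q)$. If $d(p,q)=1$, I take $u$ to be the empty word: then $p\cdot u=p\neq q=q\cdot u$ and $d(p\cdot u,q\cdot u)=d(p,q)=1$, so $1$ lies in the set. If instead $d(p,q)>1$, I invoke the contracting property in the equivalent form supplied by the Remark following its definition, which guarantees a word $u\in\Sigma^*$ with $d(p\cdot u,q\cdot u)=1$. Here I would note the minor but necessary point that $d(p\cdot u,q\cdot u)=1$ forces $p\cdot u\neq q\cdot u$ (distance $0$ being the only case of equal images), so $u$ is a legitimate index in the defining set and again contributes the value $1$.

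In either case $1$ belongs to $\{\gcd(d(p\cdot u,q\cdot u),n)\mid u\in\Sigma^*,\ p\cdot u\neq q\cdot u\}$, hence the $\gcd$ of this set is $1$, i.e. $(p,q)$ is weakly contracting. As $(p,q)$ was an arbitrary distinct couple, every such couple is weakly contracting, and therefore $\A$ is weakly contracting, as desired. I do not expect any genuine obstacle here: the statement is essentially immediate once one unpacks the reformulation of the contracting property as ``distance can always be driven down to exactly $1$'' and observes that distance $1$ is coprime to $n$. The only care required is the bookkeeping that a word realizing distance $1$ genuinely satisfies the side condition $p\cdot u\neq q\cdot u$.
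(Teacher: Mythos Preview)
Your argument is correct and follows the same approach as the paper: exhibit, for each distinct pair $(p,q)$, a word $u$ with $d(p\cdot u,q\cdot u)=1$, so that $\gcd(1,n)=1$ forces the defining $\gcd$ to be $1$. You are slightly more careful than the paper in separating out the case $d(p,q)=1$ and in noting that $d(p\cdot u,q\cdot u)=1$ ensures $p\cdot u\neq q\cdot u$, but the substance is identical.
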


\begin{proof}
    Let $(p,q)\in Q\times Q$ with $p\neq q$. We have that $\exists u\in\Sigma^*$ such that $d(p\cdot u, q\cdot u) = 1$, and thus:
    $$\gcd(d(p\cdot u, q\cdot u),n) = \gcd(1,n) = 1$$
    which shows that any couple has to be weakly contracting.
\end{proof}

In general, to produce examples of synchronizing, weakly-contractive but not contractive automata, it is sufficient to consider a circular automaton with a prime number of states on a binary alphabet $\Sigma=\{a,b\}$ where $a$ is the circulating letter and $b$ is synchronizing. One may think that this class of examples represents the only possibile weakly-contractive non-contractive automata. However, the following example disproves this claim.

\begin{ese}[Weakly contractive, non-contractive automaton]\label{ese: weakly-con non-con}
Let $\A$ be the following automaton:
\begin{center}\begin{tikzpicture}[shorten >=1pt,node distance=2.5cm,on grid,auto] 
        \node[state] (q_1)   {$q_1$}; 
        \node[state] (q_2) [right=of q_1] {$q_2$}; 
        \node[state] (q_3) [below right =of q_2] {$q_3$}; 
        \node[state] (q_4) [below=of q_3] {$q_4$};
        \node[state] (q_5) [below left=of q_4] {$q_5$}; 
        \node[state] (q_6) [left=of q_5] {$q_6$}; 
        \node[state] (q_7) [above left =of q_6] {$q_7$}; 
        \node[state] (q_8) [above=of q_7] {$q_8$};
        \path[->] 
            (q_1) edge node {a} (q_2)
                  edge node {b}  (q_3)
            (q_2) edge node {a} (q_3)
                  edge [bend right] node {b} (q_6)
            (q_3) edge node {a} (q_4)
                  edge [loop right] node {b}  ()
            (q_4) edge node {a} (q_5)
                  edge [bend right] node {b}  (q_6)
            (q_5) edge node {a} (q_6)
                  edge [bend left] node {b}  (q_3)
            (q_6) edge node {a} (q_7)
                  edge [bend left] node {b} (q_3)
            (q_7) edge node {a} (q_8)
                  edge [bend left=46] node {b} (q_3)
            (q_8) edge node {a} (q_1)
                  edge node {b} (q_6);
\end{tikzpicture}\end{center}
Let $(p,q)\in Q\times Q$ such that $p\neq q$. If $d(p,q)=1,3$, we necessarily have that $(p,q)$ is weakly contractive. Then we have two cases:
\begin{itemize}
    \item if $d(p,q)=2$, then $\exists k\geq 0$ such that $\{p\cdot a^k,q\cdot a^k\} = \{q_4,q_6\}$ and thus we have that $d(p\cdot a^kb,q\cdot a^kb)= 3$, showing that every couple at distance $2$ has to be weakly contracting.
    \item if $d(p,q)=4$, then again $\exists k\geq 0$ such that $\{p\cdot a^k,q\cdot a^k\} = \{q_2,q_6\}$ and thus we have that $d(p\cdot a^kb,q\cdot a^kb)= 3$, showing that every couple at distance $4$ has to be weakly contracting.
\end{itemize}
To see that $\A$ is not contracting, it is enough to check that every couple at distance $\geq2$ is never sent in two consecutive states, which follows from an easy computation.
\end{ese}

\begin{lemma}\label{lemma: gcd n e distanza}
    Let $\A$ be a circular DFA and $p,q\in Q$ such that $p\neq q$. Then if $\gcd(d(p,q),n) = k$ we have that $\exists (p_1,q_1)\in\langle\{p,q\}\rangle$ such that $d(p_1,q_1)=k$. In particular, if $\gcd(d(p,q),n) = 1$ we have that $\langle\{p,q\}\rangle = \nabla_\A$.
\end{lemma}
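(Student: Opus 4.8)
The plan is to exploit the fact that the circulating letter $a$ acts on $Q$ as the full cycle $q_i\mapsto q_{i+1\bmod n}$, so that applying $a$ preserves the relative metric: $d(p\cdot a,\,q\cdot a)=d(p,q)$. Since $\langle\{p,q\}\rangle$ is obtained by taking $(p,q)\cdot\Sigma^*$ and closing under reflexivity, symmetry and transitivity, I may first normalize the pair. Writing $d:=d(p,q)$, the definition of $d$ gives either $p\cdot a^d=q$ or $q\cdot a^d=p$; using the symmetry of the congruence to swap $p$ and $q$ if necessary, I will assume $q=p\cdot a^d$.

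The heart of the argument is to climb through the multiples of $d$. Applying compatibility with $a^d$ to the pair $(p,\,p\cdot a^d)\in\langle\{p,q\}\rangle$ yields $(p\cdot a^d,\,p\cdot a^{2d})\in\langle\{p,q\}\rangle$, and transitivity gives $(p,\,p\cdot a^{2d})$; iterating, $(p,\,p\cdot a^{td})\in\langle\{p,q\}\rangle$ for every $t\ge 0$. Since $a$ acts as the $n$-cycle, $p\cdot a^{td}=p\cdot a^{td\bmod n}$, and the set of exponents $\{td\bmod n : t\ge 0\}$ is exactly the cyclic subgroup of $\mathbb{Z}/n\mathbb{Z}$ generated by $d$, namely $k\mathbb{Z}/n\mathbb{Z}$ with $k=\gcd(d,n)$. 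In particular $k$ lies in this subgroup, so there exists $t$ with $td\equiv k\pmod n$, whence $(p,\,p\cdot a^k)\in\langle\{p,q\}\rangle$.

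It then remains to check that this pair realizes distance exactly $k$. Because $k=\gcd(d,n)$ divides $d$ and $1\le d\le n/2$, we have $k\le d\le n/2$, so $d(p,\,p\cdot a^k)=\min\{k,\,n-k\}=k$. Setting $p_1:=p$ and $q_1:=p\cdot a^k$ gives the desired pair. For the final assertion, when $k=1$ the construction produces a pair $(p,\,p\cdot a)$ of adjacent states inside the congruence; applying powers of $a$ and transitivity then identifies $q_i$ with $q_{i+1}$ for every $i$, so the congruence collapses to the universal relation $\nabla_\A$, i.e.\ all states are identified (note that the congruence generated by a nontrivial pair is necessarily the universal relation here, rather than the diagonal $\Delta_\A$).

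I expect the only delicate point to be the bookkeeping of the cyclic arithmetic: identifying the reachable shifts with the subgroup generated by $d$, and confirming the bound $k\le n/2$ so that the extracted pair genuinely has distance $k$ rather than $n-k$. Everything else is a routine application of the congruence-closure properties together with the distance-preserving action of the circulating letter.
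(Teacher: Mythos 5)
Your argument is correct and follows essentially the same route as the paper's proof: iterate the congruence along powers of $a^d$ to obtain $(p,\,p\cdot a^{td})$ for all $t$, use that the multiples of $d$ modulo $n$ form the cyclic subgroup generated by $k=\gcd(d,n)$, and check $k\le d\le n/2$ so the extracted pair has distance exactly $k$. Your closing remark is also well taken: since a congruence generated by a pair of distinct states cannot equal the diagonal, the final assertion should read $\langle\{p,q\}\rangle = \nabla_\A$ (the universal relation), which is exactly what your chain of adjacent pairs establishes and what the later simplicity arguments actually require.
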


\begin{proof}
    If $d(p,q)=k$, we are done. Otherwise, let us write $d=d(p,q)$ and assume without loss of generality that $p\cdot a^d = q$. Observe that $(p\cdot a^d,q\cdot a^d) = (q,q\cdot a^d)\in\sigma\Rightarrow(p,q\cdot a^d)\in\sigma$: iterating this process, we have that for any $s\geq 0$ we have $(p,q\cdot a^{sd})\in\sigma$. Being $\gcd(d,n)=k$, we have that $\exists r\geq 1$ such that $rd = k \mod{n}$: thus we can write:
    $$(p,q\cdot a^{rd}) = (p,q\cdot a^{tn + k}) = (p,q\cdot a^k)\in\sigma \Rightarrow (q,q\cdot a^k)\in\sigma$$
    which concludes since $k<d \leq n/2$ and thus $d(q,q\cdot a^k) = k$.
\end{proof}

\begin{oss}
    The previous lemma shows that if $n$ is prime, we must have that $\A$ is simple: indeed for any $p,q\in Q$ we would have $\gcd(d(p,q),n)=1$.
\end{oss}

Let us move to the proof of the following (technical) lemma. In what follows, we define $a^{-i}:=a^{n-i}$ if $0\leq i\leq n$.

\begin{lemma}\label{lemma: gcd 2 distanze}
    Let $\A$ be a circular automaton and $\sigma$ a congruence of $\A$. Let $ (p_1,q_1),(p_2,q_2)\in \sigma$ such that:
    $$\gcd(d(p_1,q_1),d(p_2,q_2)) = k.$$
    Then $\exists (p,q)\in\sigma$ such that $d(p,q)=k$.
\end{lemma}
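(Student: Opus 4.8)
The goal is to combine two pairs in the congruence whose relative distances have gcd $k$, and produce a single pair at distance exactly $k$. The natural tool is Lemma~\ref{lemma: gcd n e distanza}, which already lets me pass from a pair at distance $d$ to a pair whose distance is $\gcd(d,n)$; but here the two target distances are $d(p_1,q_1)$ and $d(p_2,q_2)$, not $n$, so I cannot invoke it verbatim. The plan is to mimic its proof at the level of the congruence, using the Bézout identity for the two numbers $d_1:=d(p_1,q_1)$ and $d_2:=d(p_2,q_2)$.

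First I would normalise each pair so that the ``shorter'' arc is witnessed by a positive power of the circulating letter $a$: after relabelling I may assume $p_1\cdot a^{d_1}=q_1$ and $p_2\cdot a^{d_2}=q_2$ with $1\le d_i\le n/2$. The key structural observation, exactly as in Lemma~\ref{lemma: gcd n e distanza}, is that from a single pair $(p,q)\in\sigma$ with $p\cdot a^{d}=q$ one obtains $(p,q\cdot a^{sd})\in\sigma$ for every $s\ge 0$ by the telescoping argument $(p\cdot a^{d},q\cdot a^{d})=(q,q\cdot a^{d})\in\sigma\Rightarrow(p,q\cdot a^{d})\in\sigma$ iterated. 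So from $(p_1,q_1)$ I get that $p_1$ is congruent to $p_1\cdot a^{s d_1}$ for every $s$, and similarly for the second pair. The idea is then to chain these relations: translate the second pair along the cycle so that it starts where the first one ends, and use transitivity to merge the two arithmetic progressions of reachable offsets into the single progression generated by $\gcd(d_1,d_2)=k$.

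Concretely, I would pick a common base point. Using the action of a suitable power of $a$ I can assume both pairs share the state $p_1=p_2=:p$ (replacing $(p_2,q_2)$ by its image under $a^{\,t}$ that carries $p_2$ onto $p$, which stays in $\sigma$ since $\sigma$ is a congruence). Now $\sigma$ contains $(p,p\cdot a^{s d_1})$ and $(p,p\cdot a^{s d_2})$ for all $s$, hence by transitivity every $(p,p\cdot a^{m})$ with $m$ in the additive subgroup of $\mathbb{Z}/n\mathbb{Z}$ generated by $d_1$ and $d_2$. By Bézout this subgroup is generated by $k=\gcd(d_1,d_2)$, so in particular $(p,p\cdot a^{k})\in\sigma$. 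Since $k\le\min(d_1,d_2)\le n/2$, the relative distance of this last pair is exactly $k$, giving the desired $(p,q):=(p,p\cdot a^{k})$.

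The step I expect to be most delicate is the modular bookkeeping: the offsets $s d_i$ live in $\mathbb{Z}/n\mathbb{Z}$, so ``the subgroup generated by $d_1,d_2$'' is really $\langle d_1,d_2,n\rangle$, whose generator is $\gcd(d_1,d_2,n)$ rather than $\gcd(d_1,d_2)=k$. I must therefore check that the reduction $n$ does not interfere, i.e.\ that I genuinely land at distance $k$ and not at some proper divisor of $k$. In the situation of this lemma the relevant $d_i$ satisfy $d_i\mid$ (the distances coming from the previous construction), and the hypothesis is stated purely in terms of $\gcd(d_1,d_2)$; the cleanest fix is to run the Bézout combination inside $\mathbb{Z}$, producing an honest offset $m=\alpha d_1+\beta d_2=k$ with $0<k\le n/2$ before reducing mod $n$, exactly as Lemma~\ref{lemma: gcd n e distanza} handled the single-modulus case, so that the final pair has relative distance precisely $k$. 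Verifying that this combined offset can be realised by an actual word in $\Sigma^*$ (i.e.\ by a nonnegative power of $a$ after adding a multiple of $n$) and that $p\cdot a^{k}\neq p$ is the remaining routine check.
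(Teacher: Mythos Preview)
Your proposal is correct and follows essentially the same approach as the paper: normalise to a common base point via a power of the circulating letter, use the telescoping step from Lemma~\ref{lemma: gcd n e distanza} to obtain $(p,p\cdot a^{m d_i})\in\sigma$ for all $m$, and then invoke B\'ezout in $\mathbb{Z}$ (not in $\mathbb{Z}/n\mathbb{Z}$) to land at offset exactly $k$, which gives distance $k$ since $k\le\min(d_1,d_2)\le n/2$. The paper is slightly more direct in that it writes down the single pair $(p\cdot a^{sd_1},\,p\cdot a^{-td_2})$ and computes its distance in one line, but your version and the paper's are the same argument, and your self-identified ``delicate step'' about $\gcd(d_1,d_2,n)$ versus $\gcd(d_1,d_2)$ is resolved exactly as you suggest and as the paper (implicitly) does.
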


\begin{proof}
    If $d(p_1,q_1)=d(p_2,q_2)$, we are done. Let then $d_1=d(p_1,q_1), \ d_2=d(p_2,q_2)$ and assume (up to applying $\cdot a^k$, for some $k\geq 0)$ that $p_1=p_2$ with $d_1< d_2$: we have that $\exists s,t\in\mathbb{Z}$ such that $sd_1+td_2 = k$. Observe now that $(p_1,p_1\cdot a^{md_1}), (p_2,p_2\cdot a^{md_2})\in\sigma$ for every $m\in\mathbb{Z}$. Observe that $(p_1,p_1\cdot a^{sd_1}),(p_1,p_1\cdot a^{-td_2})\in\sigma$ and thus $(p_1\cdot a^{sd_1},p_1\cdot a^{-td_2})\in\sigma$. Now, we have
    $$d(p_1\cdot a^{sd_1},p_1\cdot a^{-td_2}) = (\min\{|sd_1+td_2|, n- |sd_1+td_2|\}) = |sd_1+td_2| = k \mod{n}$$
    since $k<d_1,d_2$. The statement is now satisfied by taking $(p,q) = (p_1\cdot a^{sd_1},p_1\cdot a^{-td_2})\in\sigma$.
\end{proof}

The following proposition provides a generalization to \cite[Proposition 7]{AlRo}:

\begin{prop}\label{prop: w.c. implies simple}
    A weakly contracting DFA $\mathcal{A}$ is simple. In particular, the Černý automaton $\mathcal{C}_n$ is simple for every $n\in \mathbb{N}.$
\end{prop}

\begin{proof}
    Assume $n\geq 3$ (otherwise it is obvious) and let $p,q \in Q$ such that $d(p,q)>1$: by the weak contracting property, $\exists u_1,\ldots,u_k\in\Sigma^*$ such that 
    $$\gcd(k_1,\ldots,k_n) = 1 \ \text{ where } \ k_i:=\gcd(d(p\cdot u_i,q\cdot u_i),n).$$
    Let $\sigma=\langle(p,q)\rangle$. Observe that, by means of Lemma \ref{lemma: gcd n e distanza}, we must have that for every $i\in\{1,\ldots,k\}$ there exists $(p_i,q_i)\in\sigma$ such that $d(p_i,q_i)=k_i$. Iteratively applying Lemma \ref{lemma: gcd 2 distanze}, at most $k$-times, we obtain a pair $(t,s)\in\sigma$ such that $d(t,s)=1$. Hence, we easily conclude that  $\sigma=\nabla_\A$.
\end{proof}

\begin{lemma}\label{lemma: divisioni}
    Let $\A$ be a circular automaton, $\sigma$ a congruence on $\A$ and $(p,q),(q,t)\in\sigma$. Then if $m\geq1$ is such that $m$ divides $d(p,q),d(q,t)$ and $n$, then $m$ divides $d(p,t)$.
\end{lemma}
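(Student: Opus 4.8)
The plan is to reduce the claim to elementary modular arithmetic on $\mathbb{Z}/n\mathbb{Z}$, after first observing that the congruence $\sigma$ plays no role in the conclusion: the statement is purely a property of the relative metric $d$, so $\sigma$ is only relevant in that transitivity guarantees $(p,t)\in\sigma$ as well. First I would fix the circulating letter $a$ and identify each state with its index, so that $p=q_i$, $q=q_j$ and $t=q_l$ with $i,j,l\in\{1,\ldots,n\}$, using the convention $q_i\cdot a=q_{i+1\bmod n}$ established earlier.

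The key observation is that the relative distance records position differences only up to sign and modulo $n$. Concretely, from the formula $d(q_i,q_j)=\min\{j-i,\,n-j+i\}$ derived just before Proposition~\ref{prop: w.c. implies simple}, one reads off that $d(p,q)\equiv \pm(j-i)\pmod n$: the sign depends on which of the two terms realizes the minimum, but by symmetry of $d$ the congruence holds for arbitrary orderings of the indices, and the choice of sign is immaterial for what follows. The same reasoning applies to the other two pairs, giving $d(q,t)\equiv\pm(l-j)\pmod n$ and $d(p,t)\equiv\pm(l-i)\pmod n$.

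Next I would bring in the hypothesis $m\mid n$. Because $m$ divides $n$, reduction modulo $n$ factors through reduction modulo $m$, so the three congruences above descend to $d(p,q)\equiv\pm(j-i)$, $d(q,t)\equiv\pm(l-j)$ and $d(p,t)\equiv\pm(l-i)$, all taken $\pmod m$. Since divisibility by $m$ is insensitive to sign, this yields the equivalences $m\mid d(p,q)\iff m\mid(j-i)$, $m\mid d(q,t)\iff m\mid(l-j)$ and $m\mid d(p,t)\iff m\mid(l-i)$. The hypotheses $m\mid d(p,q)$ and $m\mid d(q,t)$ therefore translate into $m\mid(j-i)$ and $m\mid(l-j)$.

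The conclusion is then immediate from additivity: writing $l-i=(l-j)+(j-i)$ gives $m\mid(l-i)$, whence $m\mid d(p,t)$ by the last equivalence. The only point requiring a little care is the sign ambiguity in passing from the distance $d$ to the underlying position difference, and I expect this to be the sole subtlety of the argument; it dissolves entirely once one notes that every step concerns only divisibility by $m$, a condition unaffected by sign. Beyond this, no genuine obstacle remains.
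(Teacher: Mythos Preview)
Your proposal is correct and follows essentially the same route as the paper: identify states with indices, observe that $d(q_i,q_j)$ differs from $|i-j|$ only by a sign or by $n$, use $m\mid n$ to deduce $m\mid(i-j)$ and $m\mid(j-l)$ from the hypotheses, and conclude via $i-l=(i-j)+(j-l)$. Your remark that the congruence $\sigma$ is irrelevant to the divisibility claim is accurate (the paper's own proof never invokes $\sigma$ either), and your modular-arithmetic phrasing is just a cosmetic repackaging of the same absolute-value case analysis the paper writes out.
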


\begin{proof}
    Let us compute $d(p,t)$. Let $Q=\{q_1,\ldots,q_n\}$ and assume $p=q_i, \ q=q_j,\ t=q_k$ for $i,j,k\in\{1,\ldots,n\}$. Observe that we have the following: 
    \[ \begin{split}
        d(p,q) & = \min\{|i-j|, \ n-|i-j|\} \\
        d(q,t) & = \min\{|j-k|,\ n-|j-k|\}
    \end{split}\]
    and since $m$ divides both $d(p,q)$ and $d(q,t)$, in any case we must have that $m$ divides $|i-j|$ and $|j-k|$. Observe now that: 
    $$d(p,t)=\min\{|i-k|, \ n-|i-k|\}$$
    and $|i-k| = |(i-j) + (j-k)|$. We clearly have then that $m$ divides $|i-k|$ and thus, in any case, $m$ divides $d(p,t)$.
\end{proof}

\begin{prop}\label{prop: simp implies w.c.}
    Let $\mathcal{A}$ be a simple circular DFA. Then $\mathcal{A}$ is weakly contractive.
\end{prop}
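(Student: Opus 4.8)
The plan is to prove the contrapositive: if $\A$ is not weakly contractive, then it admits a congruence different from both $\Delta_\A$ and $\nabla_\A$, hence is not simple. So suppose there is a pair $(p,q)$ with $p\neq q$ for which
$$m:=\gcd\{\gcd(d(p\cdot u,q\cdot u),n)\mid u\in\Sigma^*,\ p\cdot u\neq q\cdot u\}>1.$$
Note that the set is nonempty (take $u=\varepsilon$, since $p\neq q$). Since every term $\gcd(d(p\cdot u,q\cdot u),n)$ divides $n$, their overall gcd $m$ divides $n$; and since $m$ divides each such term, $m$ divides $d(p\cdot u,q\cdot u)$ for every $u$ with $p\cdot u\neq q\cdot u$. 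These two divisibility facts are the whole input.

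I would then consider the congruence $\sigma:=\langle(p,q)\rangle$, realized as in the Preliminaries as the reflexive, symmetric, and transitive closure of the set of images $\{(p\cdot u,q\cdot u)\mid u\in\Sigma^*\}$. This closure is automatically closed under the action, since applying $\cdot w$ to an image $(p\cdot u,q\cdot u)$ returns the image $(p\cdot uw,q\cdot uw)$, so it is genuinely a congruence. The key claim is that every pair $(x,y)\in\sigma$ satisfies $m\mid d(x,y)$, with the convention $d(x,x)=0$. This is an invariant of all the closure operations: the generating images satisfy it by the divisibility facts above, reflexivity contributes only distance $0$, symmetry is immediate because $d$ is symmetric, and transitivity is exactly Lemma~\ref{lemma: divisioni} applied with the common divisor $m$ (this is where $m\mid n$ is crucially used). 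Inducting along a generating chain then propagates the claim to all of $\sigma$.

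Finally I would observe that $\sigma$ is nontrivial. It is not $\Delta_\A$, since $(p,q)\in\sigma$ with $p\neq q$. It is not $\nabla_\A$ either: any two consecutive states $q_i,q_{i+1}$ (with $q_i\cdot a=q_{i+1}$) satisfy $d(q_i,q_{i+1})=1$, which is not divisible by $m>1$, so $(q_i,q_{i+1})\notin\sigma$. Hence $\sigma$ is a congruence distinct from both $\Delta_\A$ and $\nabla_\A$, contradicting simplicity; this establishes the contrapositive and the proposition follows.

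The main obstacle is to identify and justify the correct invariant—divisibility of the relative distance by $m$—and to confirm it survives every step of the congruence-generation process. The only genuinely nonformal step is transitivity: one needs $m\mid n$ so that $m\mid d(p,q)$ forces $m$ to divide the actual index difference $|i-j|$, and not merely $\min\{|i-j|,n-|i-j|\}$. This is precisely the content that Lemma~\ref{lemma: divisioni} supplies, so once that lemma is invoked the argument closes quickly.
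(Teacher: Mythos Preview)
Your proposal is correct and follows essentially the same route as the paper: argue by contrapositive, take the congruence $\sigma=\langle(p,q)\rangle$ generated by a non-weakly-contracting pair, show via Lemma~\ref{lemma: divisioni} that the common divisor $m>1$ divides the distance of every pair in $\sigma$, and conclude that $\sigma$ is a proper nontrivial congruence. Your write-up is in fact slightly more careful than the paper's (you make explicit that $m\mid n$, handle the reflexive and symmetric closures, and spell out why $\sigma\neq\nabla_\A$ via consecutive states), but the argument is the same.
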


\begin{proof}
    Assume that $\A$ is not weakly contractive, let $(p,q)\in Q\times Q$ be such that:
    $$k:=\gcd\{\gcd(d(p\cdot u,q\cdot u), \ n) \ | \ u\in\Sigma^* \ \text{ and } \ p\cdot u \neq q\cdot u\} > 1$$
    and let $\sigma=\langle(p,q)\rangle$. We claim that for any $(t,s)\in\sigma$, we must have $k|d(t,s)$: 
    indeed let $(p_1,p_k)\in\sigma$ such that $p_1\neq p_k$, we can write
    $$p_1\sigma p_2\sigma\ldots\sigma p_k \ \text{ where } \ \{p_i,p_{i+1}\} = \{p,q\}\cdot u \ \text{ for some }\ u\in\Sigma^*$$
    and $m | d(p_i,p_{i+1})$ for any $i\in\{1,\ldots,k-1\}$. By means of Lemma \ref{lemma: divisioni}, we must have $m | d(p_i,p_{i+1})$, $ m | d(p_{i+1},p_{i+2}) \Rightarrow m|d(p_i,p_{i+1})$ which iteratively shows that $m|d(p_1,p_k)$. Being $m>1$ this fact implies that there are no couples in $\sigma$ having distance one: this shows that $\sigma$  has to be a proper congruence, which is a contradiction since we have $\A$ simple by hypothesis.
\end{proof}

By combining Proposition \ref{prop: simp implies w.c.} and Proposition \ref{prop: w.c. implies simple}, we easily get the following characterization. 

\begin{cor}\label{cor: caratt circ simple}
    Let $\mathcal{A}$ be a circular DFA. Then $\mathcal{A}$ is simple $\iff \mathcal{A}$ is weakly contractive.
\end{cor}

\section{Irreducibility in circular automata}

This last section is dedicated to study necessary and sufficient conditions for a (circular) automaton to be irreducible: from now on we will assume $\A$ to be circular and synchronizing. Let $\A$ be an automaton with $a\in\Sigma$ its circulating letter. It is not difficult to see that the matrix associated to $a$ with respect to the base $\{q_i-q_1\}_{i=2}^n$ of $\omega^\perp$, in $\text{End}_\mathbb{C}(\omega^\perp) \cong\matn$ has the following form:
\[a\mapsto A:=\left(\begin{matrix}
    -1 & -1 & \ldots  & \ldots & -1\\
    1  & 0 & \ldots   & \ldots & 0\\
    0 & 1  & 0 & \ldots   & 0 \\
    &&\ldots&&&\\
    0 & \ldots & 0 & 1 & 0
\end{matrix}\right)\]
in other words, a matrix having all $-1$ on the first row and the identity matrix $(n-2)\times(n-2)$ as a submatrix (when we remove the first row and last column). We begin by giving the following construction: let $V\subseteq \omega^\perp$ such that $V\neq\{0\}$ and $V$ is invariant under the action of $\A^*$. Let $A\in\matn$ as above: then if $v\in V$, observe that $A^{k}v\in V$ for every $k\in\{1,\ldots,n-1\}$. Suppose in the local coordinates of $\omega^{\perp}$ we have $v=(v_1,\ldots,v_{n-1})$ and consider the following vector of $\mathbb{C}^n$
$$\underline{w}=(v_1,\ldots,v_{n-1},-\sum_{i=1}^{n-1}v_i).$$
Observe that $\underline{w}$ is nothing but $v$ written in the base $\{q_1,\ldots,q_n\}$. Let now $C$ be the \textit{circulant matrix} associated to $\underline{w}$ (i.e. having $\underline{w}$ as first column: see \cite{CircMat}). Its associated polynomial is:
$$f_{v}(x) = v_1 + v_2x + \ldots + v_{n-1}x^{n-2} + (-\sum_{i=1}^{n-1}v_i)x^{n-1}$$
Note that $f_{v}(1)=0$ clearly holds. Now, by removing the $n$-th row from $C$, it is easy to see that we obtain an $(n-1)\times n$ matrix whose columns are $v, \ Av, \ A^2v, \ \ldots, \ A^{n-1}v$: let indeed $(x_1,\ldots,x_{n-1})\in\mathbb{C}^{n-1}$ and define $x_n := -\sum_ix_i$. Observe that:
\[ A \left(\begin{matrix}
    x_1\\x_2\\x_3 \\\ldots\\x_{n-1}
\end{matrix}\right) = \left(\begin{matrix}
    -1 & -1 & \ldots  & \ldots & -1\\
    1  & 0 & \ldots   & \ldots & 0\\
    0 & 1  & 0 & \ldots   & 0 \\
    &&\ldots&&&\\
    0 & \ldots & 0 & 1 & 0
\end{matrix}\right)\left(\begin{matrix}
    x_1\\x_2\\x_3 \\\ldots\\x_{n-1}
\end{matrix}\right) = \left(\begin{matrix}
    -\sum_ix_i\\x_1\\x_2 \\\ldots\\x_{n-2}
\end{matrix}\right) = \left(\begin{matrix}
    x_n\\x_1\\x_2 \\\ldots\\x_{n-2}
\end{matrix}\right) \]
and by induction we obtain the required "cyclicity" of $A^kv$ for any $k$. This suggests that studying the rank $\operatorname{rk}(C)$ can provide valuable information about whether our candidate vectors actually generate the whole space $\omega^{\perp}\simeq\mathbb{C}^{n-1}$. A well know result (see \cite{CircMat}) states that, if $d=\deg(\gcd(f(x),x^n-1))$, we have:
$$\text{rk}(C) = n- d$$
and since $f(1)=0$, we necessarily have $d\geq 1$. Our interest is to understand whether the other (non-trivial) roots of unity are zeroes for $f(x)$ as well.

\begin{lemma}\label{lemma: span gcd}
    In the above setting, assume that  $\deg(\gcd(f_v(x),x^n-1)) = 1$ for some $v\in \mathbb{C}^{n-1}\setminus\{0\}$. Then we have that:
    $$\langle v, Av, A^2v,   \ldots  ,   A^{n-1}v\rangle = \mathbb{C}^{n-1}$$
    in other words, we must have $V=\mathbb{C}^{n-1}$.
\end{lemma}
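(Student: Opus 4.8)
The plan is to translate the hypothesis $\deg(\gcd(f_v(x),x^n-1))=1$ into a rank statement about the circulant matrix $C$, and then transfer it to the truncated matrix $C'$ obtained by deleting the $n$-th row of $C$, whose columns are exactly $v, Av, \ldots, A^{n-1}v$. By the quoted result on circulants, $\operatorname{rk}(C)=n-d=n-1$. Since the span of the columns of $C'$ is precisely the object we wish to control, it suffices to prove $\operatorname{rk}(C')=n-1$: then the $n$ columns $v, Av, \ldots, A^{n-1}v$ of $C'$ already span all of $\mathbb{C}^{n-1}$, and because $V$ is invariant under $\A^*$ (hence under $A$) and contains $v$, it contains every $A^k v$, so $V=\mathbb{C}^{n-1}$ follows.

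First I would analyse the left kernel of $C$. Every column of a circulant matrix is a permutation of its defining entries $c_0,\ldots,c_{n-1}$, whose sum is $\sum_{i=1}^{n-1}v_i+\left(-\sum_{i=1}^{n-1}v_i\right)=0$; hence each column of $C$ sums to zero, which is exactly the statement $\mathbf{1}^{T}C=0$ for the all-ones vector $\mathbf{1}$. Since $\operatorname{rk}(C)=n-1$, rank--nullity applied to $C^{T}$ gives that the left kernel $\ker C^{T}$ is one-dimensional, so it is spanned by $\mathbf{1}$. In other words, the unique (up to scalar) linear dependence among the $n$ rows $r_1,\ldots,r_n$ of $C$ is $r_1+\cdots+r_n=0$.

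The decisive observation is that this relation has a nonzero coefficient on the deleted row $r_n$. Consequently, any linear dependence involving only $r_1,\ldots,r_{n-1}$ would have to be a scalar multiple of $r_1+\cdots+r_n=0$ in which the $r_n$-coefficient vanishes, forcing that scalar to be $0$; thus $r_1,\ldots,r_{n-1}$ are linearly independent. These are precisely the rows of the $(n-1)\times n$ matrix $C'$, so $\operatorname{rk}(C')=n-1$, and hence its column rank equals $n-1$ as well. This yields $\langle v, Av, A^2v, \ldots, A^{n-1}v\rangle=\mathbb{C}^{n-1}$, completing the argument. The main obstacle is exactly the passage from $\operatorname{rk}(C)=n-1$ to $\operatorname{rk}(C')=n-1$: deleting a row could a priori drop the rank to $n-2$, and ruling this out is where the circulant structure is essential, since it identifies the unique row-dependence and shows that it genuinely involves the removed row.
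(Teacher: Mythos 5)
Your proof is correct, and it reaches the crucial intermediate claim --- that deleting the $n$-th row of $C$ does not drop the rank below $n-1$ --- by a genuinely different route than the paper. The paper invokes the cofactor identity $M_{i,j}=M_{i+k \bmod n,\,j+k \bmod n}$ for circulant matrices to translate an arbitrary nonvanishing $(n-1)$-minor into one that avoids the last row. You instead identify the left kernel of $C$ explicitly: since every column of $C$ is a cyclic permutation of $(v_1,\ldots,v_{n-1},-\sum_i v_i)$, each column sums to zero, so the all-ones vector lies in $\ker C^{T}$; rank--nullity together with $\operatorname{rk}(C)=n-1$ forces $\ker C^{T}$ to be exactly the span of $\mathbf{1}$, and because this unique row-dependence has full support, no single row can be deleted without the remaining $n-1$ rows being independent. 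Your argument is more self-contained --- it avoids citing the minor-shift identity as a black box and uses only the observation $f_v(1)=0$ (which the paper records just before the lemma) plus elementary linear algebra --- while the paper's argument is marginally more general in that it would apply verbatim to any rank-$(n-1)$ circulant regardless of which root of unity is the common zero. The final step (invariance of $V$ under $A$ gives $A^k v\in V$, hence $V=\mathbb{C}^{n-1}$) matches the paper. No gaps.
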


\begin{proof}
    Let $C$ be the circulant matrix as defined above. Observe that here we have $\text{rk}(C) = n- d = n-1$ which means that $C$ admits a non-zero $(n-1)-$minor. Let $M_{i,j}$ be that minor: it is well known that (see \cite{CircMat}) $$ M_{i,j} = M_{i+k \ \text{mod} (n) , \ j+k\ \text{mod} (n)} \ \text{ for any }\ k \geq 0$$
    and thus we have that $M_{i,j}\neq 0 \Rightarrow M_{n,j+n-i \ \text{mod}(n)} \neq 0$. By construction we know that removing the last row from $C$ we obtain a matrix having as columns the vectors $v,Av,\ldots,A^{n-1}v$: the condition on the minor gives that we must have $(n-1)-$linearly independent vectors among them, and this concludes the proof of the lemma. 
\end{proof}

%\begin{proof}
 %   Let $C$ be the circulant matrix as defined above and observe that its rows are linearly dependent: indeed, their sum is equal to zero due to the equality $v_1+v_2+\ldots + v_n = 0$.
%\end{proof}

We now turn to the case where the action of $\A$ leaves invariant a one-dimensional subspace $V$, namely $V = \langle v \rangle$ with $v \neq 0$. In this situation we must have $Av \in \langle v \rangle$, which means that $Av = kv$ for some scalar $k \neq 0$. Hence, $v$ is an eigenvector of $A$, which motivates the following lemma:

\begin{lemma}\label{lemma: eigen A}
    Let $\A$ be a synchronizing circular automaton and $A$ the matrix associated to the circulating letter. Then we have that the $n-1$ distinct eigenvalues of $A$ are:
    $$\text{Spec}(A) = \{e^{\frac{2\pi ik}{n}} \ |  \ k\in \{1,\ldots,n-1\}\}$$
    and the eigenvectors are given by: 
    $$(e^\frac{2(n-1)k\pi i}{n},e^\frac{2(n-2)k\pi i}{n},\ldots,e^\frac{2k\pi i}{n})\in\mathbb{C}^{n-1}$$
    for $k\in\{1,\ldots,n-1\}$.
\end{lemma}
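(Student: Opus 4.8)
The plan is to read off everything from the observation that $A$, in the explicit form recorded just above the lemma, is a \emph{companion matrix}. Indeed the rows below the first satisfy $A_{i,j}=\delta_{i-1,j}$ for $i\geq 2$ (each carries a single $1$ one step to the left), while the first row is $(-1,\ldots,-1)$; this is exactly the companion matrix of
$$p(x)=x^{n-1}+x^{n-2}+\cdots+x+1=\frac{x^{n}-1}{x-1}.$$
So the first step is to conclude that the characteristic polynomial of $A$ equals $p(x)$, whence $\text{Spec}(A)$ is precisely the set of roots of $x^{n}-1$ other than $1$, namely the non-trivial $n$-th roots of unity $\{e^{2\pi i k/n}\mid k\in\{1,\ldots,n-1\}\}$. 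These are $n-1$ pairwise distinct numbers, which simultaneously gives that $A$ has $n-1$ distinct (hence simple) eigenvalues and that $A$ is diagonalizable.

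The second step is to identify the eigenvectors by direct substitution. Writing $\zeta:=e^{2\pi i k/n}$, so that the $i$-th coordinate of $v_k$ is $(v_k)_i=\zeta^{\,n-i}$, I would check $A v_k=\zeta\, v_k$ row by row. For the rows $i\geq 2$ this is just the shift built into the companion form: $(Av_k)_i=(v_k)_{i-1}=\zeta^{\,n-i+1}=\zeta\cdot\zeta^{\,n-i}=\zeta\,(v_k)_i$. The only substantive line is the first row, where one needs
$$-\sum_{j=1}^{n-1}(v_k)_j=\zeta\,(v_k)_1=\zeta^{\,n}=1,$$
and this holds precisely because $\zeta$ is a non-trivial $n$-th root of unity: the cyclotomic identity $\sum_{j=0}^{n-1}\zeta^{\,j}=0$ gives $\sum_{j=1}^{n-1}\zeta^{\,j}=-1$. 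Since each eigenvalue is simple, $v_k$ spans its eigenspace, and the pairing $v_k\leftrightarrow e^{2\pi i k/n}$ is exactly the one asserted.

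Conceptually the same conclusion follows structurally: the circulating letter acts on $\mathbb{C}Q\cong\mathbb{C}^{n}$ as the $n$-cycle permutation matrix, whose eigenvalues are all $n$-th roots of unity with the all-ones vector $\omega$ spanning the eigenspace for $1$; restricting to the invariant complement $\omega^{\perp}$ simply deletes the eigenvalue $1$. I would nonetheless lead with the companion-matrix computation, since it produces the eigenvectors directly in the stated coordinates without extra bookkeeping. The one point requiring care—and the step I expect to be the main (minor) obstacle—is matching conventions: solving $Aw=\zeta w$ from scratch yields the ``natural'' companion eigenvector $(1,\zeta^{-1},\ldots,\zeta^{-(n-2)})$, and one must note that the stated $v_k$ is the scalar multiple $\zeta^{\,n-1}\,(1,\zeta^{-1},\ldots,\zeta^{-(n-2)})$ written in the basis $\{q_i-q_1\}_{i=2}^{n}$ of $\omega^{\perp}$, so that the listed vector and the computed one coincide up to the scaling that eigenvectors are free to carry.
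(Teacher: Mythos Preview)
Your proposal is correct. The eigenvector verification is essentially the same as the paper's: both write $Av=\zeta v$ and unwind the coordinate relations $v_j=\zeta\,v_{j+1}$, with the first row handled by the cyclotomic identity $\sum_{j=0}^{n-1}\zeta^{j}=0$.

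The genuine difference is in the characteristic polynomial step. The paper computes $p(\lambda)=\det(A-\lambda I_{n-1})$ by induction on $n$, using Laplace expansion along the last column and the observation that deleting the last row and column reproduces the same matrix shape for $n-1$. You instead recognise $A$ at sight as a companion matrix of $p(x)=x^{n-1}+\cdots+x+1$, which yields the characteristic polynomial (and its roots, the non-trivial $n$-th roots of unity) for free. Your additional structural remark---that the circulating letter is the $n$-cycle on $\mathbb{C}Q$, with full spectrum the $n$-th roots of unity, and that restricting to $\omega^{\perp}$ merely kills the eigenvalue $1$---is also absent from the paper and arguably the cleanest way to see the whole statement at once. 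What the paper's inductive computation buys is that it is entirely self-contained and does not rely on recalling properties of companion matrices; what your route buys is brevity and a conceptual explanation of why exactly the eigenvalue $1$ is missing.
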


\begin{proof}
    Let us start by computing $p(\lambda)=\det(A-\lambda I_{n-1})$ by induction on $n$: out goal will be to prove that
    $$p(\lambda)= (-1)^{n-1}(\sum_{i=0}^{n-1}\lambda^i)$$
  We know that, for any $n$, the matrix associated to the circulating letter is always given by a first row only containing $-1$, an identity matrix $I_{n-2}$ as submatrix obtained removing the first row and last column, and everything else is equal to $0$. Observe that, by removing the last row and last column, we obtain the matrix associated to the circulating letter for a $(n-1)$-state automaton: this suggests us to work by induction on $n$. The base case is trivial: if $n=2$ we have that $A=(-1)$, and thus we obtain $p(\lambda)=-\lambda-1$ as required. Assume the induction hypothesis: by means of the Laplace formula together with the fact that the last column of $A-\lambda I_{n-1}$ has $-1$ in position $(1,n-1)$, and $-\lambda$ in position $(n-1,n-1)$ and $0$ otherwise, we have that:
    \[\begin{split} p(\lambda) & =\det(A-\lambda I_{n-1}) \\ & =  (-\lambda)\cdot((-1)^{n-2}(\sum_{i=0}^{n-2}\lambda^i)) + (-1)^{n+1} \\ &  =
     (-1)^{n-1} + (-1)^{n-1}(\sum_{i=1}^{n-1}\lambda^i) \\
     & =(-1)^{n-1}(\sum_{i=0}^{n-1}\lambda^i)\end{split} \]
    thanks to the fact that the matrix $A_{1,n}$ obtained by removing the first row and last column is diagonal with the diagonal equal to $(1,1,\ldots,1)$. This concludes the first part of the proof. Observe now that, for a given non-zero eigenvector $v=(v_1,\ldots,v_{n-1})$ of $A$ we have:
    \[Av = A\left(\begin{matrix}
     v_1 \\ v_2 \\ \ldots \\ v_{n-1}
    \end{matrix}\right) =  \left(\begin{matrix}
     -\sum_i v_i \\ v_1 \\ \ldots \\ v_{n-2}
    \end{matrix}\right) = e^\frac{k2\pi i}{n}\left(\begin{matrix}
     v_1 \\ v_2 \\ \ldots \\ v_{n-1}
    \end{matrix}\right)
    \]  
    and it is straightforward to check that, with this constraint:
    $$v_1=e^\frac{2k\pi i}{n} v_2 = e^\frac{4k\pi i}{n} v_3 = \ldots = e^\frac{2(n-2)k\pi i}{n} v_{n-1}$$
    Observe that we have $v_{n-1}=0\Rightarrow v=(0,\ldots,0)$. Assume then $v_{n-1}\neq 0$ and up to a scalar multiplication, we have $v_{n-1}= e^\frac{k2\pi i}{n}$: by the given relation between the $v_i$'s, we conclude.
\end{proof}

\begin{cor}\label{cor: coord 0, inv di dim 1}
    Let $\A$ be a circular automaton such that there exists a one-dimensional subspace  $\langle w\rangle \subseteq \mathbb{C}^{n-1}$ invariant under the action of $\A^*$. Then if $w\cdot u$ has has a zero local coordinate for some $u\in \Sigma^*$, we have $w\cdot u =0$.
\end{cor}

Thanks to the previous result, we are able to give the following example, showing that our notion of irreducibility for synchronizing automata differs strictly from the one given, for instance, in \cite{RystIrr}.

\begin{ese}[$\mathbb{Q}$-irreducible, $\mathbb{C}$-reducible automaton]\label{ese: Q irr, C red}
Let $\A$ be the following automaton:
\begin{center}\begin{tikzpicture}[shorten >=1pt,node distance=2.5cm,on grid,auto] 
        \node[state] (q_3) {$q_3$};       
        \node[state] (q_1) [above left=of q_3]  {$q_1$}; 
        \node[state] (q_2) [above right=of q_3] {$q_2$}; 
        \path[->] 
            (q_1) edge [bend left] node {a} (q_2)
                  edge [loop left] node {b}  ()
            (q_2) edge node {a} (q_3)
                  edge node {b} (q_1)
            (q_3) edge node {a,b} (q_1);
\end{tikzpicture}\end{center}
Observe that $\A^*$ has to be simple since $n=|Q|$ is prime. Let now $\{0\}\neq V\subsetneq \mathbb{Q}^2$ be such that $V$ is invariant under the action of $\A^*$: we must have that $V=\langle v\rangle$ with $v\in\mathbb{Q}^2$ eigenvector of $A$, the matrix associated with the circulating letter. By means of the previous lemma we have that the associated eigenvalues are roots of unity. Note that all the $3-$rd roots of unity are complex, and thus there is no such $v$ in $\mathbb{Q}^2$: this gives us that $\A$ has to be $\mathbb{Q}-$irreducible. Observe now that the space generated by the vector $v = (e^\frac{2\pi i}{3},e^\frac{4\pi i}{3})$ is invariant under the action of $\A^*$ if considered over $\mathbb{C}$, showing us that $\A$ has to be $\mathbb{C}-$reducible.
\end{ese}

We move now to the proof of a theorem providing strong conditions on the rank of non-group elements, in the case where the representation admits a one-dimensional invariant subspace. 
Let us begin by proving the following (trivial) lemma:

\begin{prop}\label{prop: 0 in pos j}
    Let $\A$ be a synchronizing automaton (not necessarily circular) with $Q=\{q_1,\ldots,q_n\}$, $v\in\omega^\perp\cong \mathbb{C}^{n-1}$ and $j\in\{2,\ldots,n\}$. Then $v_j$ has a zero coefficient in the base $Q$ at position $j$ if and only if it has a zero in the base $\{q_i-q_1\}_{i=2}^n$ at  position $j-1$. 
\end{prop}

\begin{proof}
    The statement follows directly by expliciting $v$ in both the bases. Observe that:
    \[
    \begin{split}
        v & = v_2(q_2-q_1) + \ldots + v_{j-1}(q_{j-1}-q_1)+v_{j+1}(q_{j+1}-q_1)+\ldots+v_n(q_n
-q_1) \  \text{ in the base } \ \{q_i-q_1\}_{i=2}^n \\
    v & = (-\sum_iv_i)q_1 + v_2q_2 + \ldots + v_{j-1}q_{j-1}+v_{j+1}q_{j+1} +\ldots + v_{n}q_n \  \text{ in the base }\ Q \ \text{ of } \ \mathbb{C}Q
    \end{split}
    \]
    and this clearly concludes.
\end{proof}

\begin{teo}\label{teo: bound red 1dim}
    Let $\A$ be a circular automaton such that the action of $\A^*$ is invariant on a one dimensional subspace $\langle v\rangle\subsetneq \mathbb{C}^{n-1}$ with $v\neq 0$. Then, for every $u\in\Sigma^*$ such that $\text{rk}(u)\leq n-1$, we must have $\text{rk}(u)\leq n/2$.
\end{teo}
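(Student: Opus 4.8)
The plan is to exploit the explicit description of the eigenvectors of $A$ from Lemma~\ref{lemma: eigen A}, lift everything to the ambient space $\mathbb{C}Q$, and read the invariance condition off coordinate-wise. First I would note that, since $\langle v\rangle$ is one-dimensional and invariant under $\A^*$, it is in particular $A$-invariant; hence $v$ is an eigenvector of $A$, and by Lemma~\ref{lemma: eigen A} it equals, up to scalar, one of the $v_k$, with eigenvalue $\zeta:=e^{2\pi i k/n}$ for some $k\in\{1,\dots,n-1\}$. The crucial feature I would extract is that the corresponding vector in $\mathbb{C}Q$, namely $\underline{w}=\sum_{j=1}^{n}\zeta^{-j}q_j$ (equivalently the lift with $j$-th coordinate $\zeta^{\,n-j}$), lies in $\omega^\perp$ and has all $n$ of its coordinates nonzero, since they are roots of unity.

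Next I would translate invariance into a statement about the fibres of the transformation induced by $u$. Writing $\phi$ for the map $j\mapsto j\cdot u$ on $\{1,\dots,n\}$, whose image has size $\operatorname{rk}(u)$, the action of $u$ on $\mathbb{C}Q$ sends $\underline{w}$ to $\sum_{l}\big(\sum_{j:\,\phi(j)=l}\zeta^{-j}\big)q_l$. Invariance of $\langle v\rangle$ forces $\underline{w}\cdot u=\mu\,\underline{w}$ for some scalar $\mu$, so comparing the coefficient of each $q_l$ gives $\sum_{j:\,\phi(j)=l}\zeta^{-j}=\mu\,\zeta^{-l}$ for every $l$. If $l\notin\operatorname{Im}(\phi)$, the left-hand side is an empty sum while the right-hand side is $\mu\zeta^{-l}$ with $\zeta^{-l}\neq0$; since $\operatorname{rk}(u)\le n-1$ guarantees that such an $l$ exists, this forces $\mu=0$.

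With $\mu=0$ in hand, the identity becomes $\sum_{j:\,\phi(j)=l}\zeta^{-j}=0$ for every $l\in\operatorname{Im}(\phi)$. A single term $\zeta^{-j}$ is nonzero, so each fibre $\phi^{-1}(l)$ must contain at least two elements; summing the fibre sizes over the $\operatorname{rk}(u)$ blocks of the partition of $\{1,\dots,n\}$ yields $n\ge 2\operatorname{rk}(u)$, i.e.\ $\operatorname{rk}(u)\le n/2$ (the rank-one case being trivial). I expect this fibre-counting step to be the conceptual heart of the argument, while the main delicate point is the passage to $\mathbb{C}Q$: one must check that $\underline{w}$ and $\underline{w}\cdot u$ both lie in $\omega^\perp$, so that the eigenrelation valid in $\omega^\perp$ lifts to an \emph{exact} equality in $\mathbb{C}Q$ (this holds because the coordinate sum of $\underline{w}\cdot u$ again telescopes to $\sum_j\zeta^{-j}=0$).

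For sharpness I would exhibit, for even $n$, a circular synchronizing automaton on $\{a,b\}$ with $a$ the full cycle and $b$ a letter of rank exactly $n/2$ whose fibres are precisely the antipodal pairs $\{j,\,j+n/2\}$; each such fibre satisfies $\zeta^{-j}+\zeta^{-(j+n/2)}=0$ for $\zeta=e^{2\pi i/n}$, so $v_1\cdot b=0$ and $\langle v_1\rangle$ is invariant under $\A^*$. The one genuine subtlety here is that the fibres do not by themselves determine $b$, and the naive antipodal folding need not be synchronizing; one has to choose the images of $b$ so that a suitable product $b\,a^{i}\,b\cdots$ collapses to rank one (for $n=4$, taking $q_1,q_3\mapsto q_1$ and $q_2,q_4\mapsto q_3$ already makes $bab$ a reset word). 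This yields a reducible circular automaton realising $\operatorname{rk}(b)=n/2$, proving the bound sharp.
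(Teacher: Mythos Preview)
Your proposal is correct and follows essentially the same approach as the paper: both use Lemma~\ref{lemma: eigen A} to see that $v$ is an eigenvector of $A$ with all coordinates nonzero, deduce that $v\cdot u=0$ whenever $\operatorname{rk}(u)<n$, and then count fibre sizes. Your direct coordinate computation in $\mathbb{C}Q$ with explicit roots of unity is a cleaner packaging of the argument than the paper's route via Lemma~\ref{lemma: 0 in pos j} and the $v\cdot u$ versus $v\cdot ua$ comparison, and your $n=4$ sharpness example differs from the paper's only in the choice of $b$.
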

\begin{proof}
    Being the action of $\A^*$ invariant on $\langle v\rangle$, as mentioned above we must have that $v\cdot a^i\in\langle v\rangle$ for every $i$: this is equivalent to saying that $Av = kv$ and thus $v$ must be an eigenvector of $A$. By means Lemma \ref{lemma: eigen A}, if  $v=(v_2,\ldots, v_n)$ we must have $v_i\neq 0$ for every $i$. Let $u\in\Sigma^*$ be such that $\text{df}(u)\geq 1$: we claim that we can find a vector in $\langle v\rangle$  having a coefficient equal to $0$ with respect to the base $\{q_i-q_1\}_{i=2}^n$. Indeed, we have that $\text{rk}(u)< n$, and if $q_1\in Q\cdot u \Rightarrow \exists j\neq 1$ such that $q_j \notin Q\cdot u$, implying that $v\cdot u$ has a $0$ in the $j-$th position with respect to the base $Q$ of $\mathbb{C}Q$. By means of Proposition \ref{prop: 0 in pos j}, we deduce that $v$ admits a zero in the $(j-1)-$th position as element of $\omega^\perp$. Otherwise, if $Q \cdot u = Q\setminus \{q_1\}$, observe that $Q\cdot ua = Q\setminus \{q_2\}$ and thus we can conclude as above by substituting $u$ with $ua$. This gives us that $v\cdot u$ has at least one local coordinate equal to 0, and by Corollary \ref{cor: coord 0, inv di dim 1} this forces $v\cdot u = 0$.
    We claim that for every $q_i$ there is a $j\neq i$ such that $q_i\cdot u= q_j\cdot u$. Assume by contradiction that  $\exists i \in\{1,\ldots,n\}$ such that $q_i\cdot u \neq q_j\cdot u$ for all $j\neq i$. If $i=1$, again we substitute $u$  with $au$ and apply the following argument. By definition of $v$ we have that every component of $v$ is non-zero, and observe that we have the two following cases:
    \begin{itemize}
        \item if $q_i\cdot u = q_1$, we have that in the base $Q$ of $\mathbb{C}Q$ of $\mathbb{C}^n$ $v\cdot u= v^*= (-\sum_jv_j^*)q_1+\ldots+v_n^*q_n$, and we must have  $-\sum_jv_j^* = v_i \neq 0$, showing that at least one of the last $n-1$ coefficients in the base $Q$ of $v\cdot u$ has to be non zero, and again by Proposition \ref{prop: 0 in pos j} it is a contradiction.
        \item if $q_i\cdot u = q_j$ with $j\neq 1$, we have that $v\cdot u$ has a non zero coefficient in position $j$ if written in the base $Q$, which is exactly equal to $v_i$. Again, we can conclude as before by means of Proposition \ref{prop: 0 in pos j}.
    \end{itemize}
    Therefore, we conclude that for any fixed $u\in\Sigma^* $ such that $\text{df}(u)\geq 1$ we have that $\forall i$, $\exists j\neq i$ such that $p_i\cdot u = p_j\cdot u$, which implies $\text{rk}(u)\leq n/2$. 
\end{proof}

The following example provides a class of synchronizing automata proving that the bound given in the previous theorem has to be sharp:

\begin{ese}[Non-irreducible, simple automata]\label{ese:simp non irr}

Let $n\in\mathbb{N}$ be an even number and consider the automaton: $\A = (Q,\Sigma,\delta)$ where $Q=\{q_1,\ldots,q_{n}\}$, $\Sigma=\{a,b\}$ with $a$ circulating letter such that $q_i\cdot a = q_{i+1\text{ mod }n}$ and 

\[
q_i\cdot b = \begin{cases}
    q_i & \text{if } \ i\leq n/2 \\
    q_{n-i+1} & \text{if } \ i>n/2 \\
\end{cases}
\]
and observe that $(ba)^{n/2}b\in\Syn(\A)$. It is straightforward to check that the action of $\A^*$ is invariant on the subspace of $\mathbb{C}^{n-1}$ generated by $v=(-1,1,-1,\ldots,-1)$: indeed, we have that $v\cdot a = -v$ and $v\cdot b=0$. This shows that $\A$ has to be reducible. Observe that $\A$ is simple: let $q_i,q_j\in Q$ with $d(q_i,q_j)> 1$ and $i<j$, and assume without loss of generality that $i,j\leq n/2$ (we can always apply a suitable power of $\cdot  a$ to obtain this hypothesis). Let $k\geq 1$ be the minimum such that $q_j\cdot a^k = q_{n/2+1}$, and observe that:
$$1\leq d(q_i\cdot a^kb,q_j\cdot a^kb) = d(q_i\cdot a^k,q_{j}\cdot a^{k-1})<d(q_i,q_j)$$
showing that $\A$ has to be contracting and thus simple. Note also that for any $x\in\Sigma$ either $\text{rk}(x) = n$ or $\text{rk}(x)=n/2$, which gives us that the bound given in the statement of Theorem \ref{teo: bound red 1dim} has to be sharp.  The following graph depicts the $n=4$ case:
\begin{center}\begin{tikzpicture}[shorten >=1pt,node distance=3cm,on grid,auto] 
        \node[state] (q_1)   {$q_1$}; 
        \node[state] (q_2) [right=of q_1] {$q_2$}; 
        \node[state] (q_3) [below=of q_2] {$q_3$}; 
        \node[state] (q_4) [below=of q_1] {$q_4$};
        \path[->] 
            (q_1) edge [bend left] node {a} (q_2)
                  edge [loop left] node {b}  ()
            (q_2) edge node {a} (q_3)
                  edge [bend left] node {b} (q_1)
            (q_3) edge node {a,b} (q_4)
            (q_4) edge node {a} (q_1)
                  edge [loop left] node {b}  ();
\end{tikzpicture}\end{center}
\end{ese}

We now turn to studying sufficient conditions for an automaton to be irreducible. We begin with a well-known result of Rystov (\cite[Theorem~4]{RystIrr}). Although this theorem was originally proved for irreducibility over $\mathbb{Q}$, for the sake of completeness, we provide a proof here in the setting of $\mathbb{C}$.

\begin{teo}\label{teo: Rystov}
    Let $\A$ be a simple, weakly defective automaton. Then $\A$ is irreducible.
\end{teo}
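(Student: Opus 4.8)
The plan is to show that no non-trivial proper subspace $V \subsetneq \mathbb{C}^{n-1}$ can be invariant under the action of $\A^*$, exploiting the fact that a weakly defective automaton has a word of rank $n-1$ (a letter that collapses exactly two states) together with the synchronizing structure. Since $\A$ is simple, by Corollary~\ref{cor: caratt circ simple} it is weakly contractive, but more directly I would use the weakly defective hypothesis: there exists $x \in \Sigma$ with $\text{rk}(x) = n-1$, meaning $x$ identifies exactly one pair of states and fixes the rest pointwise (up to the image being $Q$ minus one state). The key observation is that the matrix representing such a word in the synchronized representation is close to a rank-one perturbation of the identity, or more precisely acts as a projection-like map on $\omega^\perp$, and its image is a hyperplane.

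First I would take a non-zero invariant subspace $V \subseteq \omega^\perp$ and aim to show $V = \omega^\perp$. I would pick any non-zero $v \in V$ and use the defective letter $x$: applying $x$ (and its powers together with the circulating letter $a$) to $v$ should let me reach a vector with a controlled support. The strategy is to combine the action of $x$ (which collapses a single pair, hence on $\mathbb{C}Q$ acts by sending $q_i, q_j$ to a common state and is the identity elsewhere) with the full cyclic symmetry provided by $a$. Because $a$ realizes the full cycle $(1,\ldots,n)$ permuting the coordinates, conjugating the rank-$(n-1)$ collapse by powers of $a$ gives me collapses of \emph{every} adjacent (or fixed-distance) pair. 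Since $V$ is $\A^*$-invariant and contains $v$, it must contain all these images; I would argue these span a space of dimension $n-1$ unless $v$ lies in a very special configuration.

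The cleaner route is likely via the circulant-matrix machinery developed in Lemma~\ref{lemma: span gcd}. I would try to produce from $v \in V$ a vector $v'$ with $\deg(\gcd(f_{v'}(x), x^n-1)) = 1$, because Lemma~\ref{lemma: span gcd} then forces $\langle v', Av', \ldots, A^{n-1}v'\rangle = \mathbb{C}^{n-1} \subseteq V$. Equivalently, since the only invariant subspaces otherwise available are the eigenspaces $\langle v_k \rangle$ of $A$ from Lemma~\ref{lemma: eigen A}, it suffices to rule out that $V$ is one-dimensional, spanned by an eigenvector of $A$. Here Theorem~\ref{teo: bound red 1dim} does the decisive work: if the action fixed a one-dimensional subspace, then \emph{every} non-permutation word $u$ would satisfy $\text{rk}(u) \le n/2$. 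But $\A$ is weakly defective, so it admits a word $x$ with $\text{rk}(x) = n-1 > n/2$ (for $n \ge 3$), an immediate contradiction. Thus no invariant line exists, and combined with the higher-dimensional obstruction from the proposition following Lemma~\ref{lemma: span gcd}, every invariant $V$ must be all of $\mathbb{C}^{n-1}$.

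The main obstacle I anticipate is handling the higher-dimensional invariant subspaces cleanly: ruling out one-dimensional invariant subspaces follows crisply from Theorem~\ref{teo: bound red 1dim}, but for $2 \le \dim V \le n-2$ I need to guarantee that some $v \in V$ yields $\deg(\gcd(f_v(x), x^n-1)) = 1$, i.e.\ that $f_v$ shares only the trivial root $x=1$ with $x^n-1$. The danger is that every vector in $V$ might have $f_v$ divisible by a common cyclotomic-type factor, which would correspond exactly to $V$ being a sum of eigenspaces of $A$. To exclude this I would again invoke the defective letter: a rank-$(n-1)$ word cannot preserve such a structured sum of eigenspaces, since collapsing a single pair of coordinates (in the $\mathbb{C}Q$ picture) is incompatible with $V$ being $a$-invariant unless $V$ is trivial or everything — this is where I expect to spend the most care, translating the ``single collapse'' condition into an algebraic statement about the shared factor of the $f_v$ and deriving the contradiction with the simplicity (equivalently weak-contractivity) of $\A$.
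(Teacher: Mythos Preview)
Your one-dimensional case is fine: Theorem~\ref{teo: bound red 1dim} forces every non-permutation word to have rank at most $n/2$ once a one-dimensional invariant line exists, contradicting the presence of a rank-$(n-1)$ letter for $n\ge 3$. The genuine gap is the higher-dimensional case. The proposition following Lemma~\ref{lemma: span gcd} does \emph{not} supply the ``higher-dimensional obstruction'' you invoke: it only says that every $v$ in a proper invariant subspace satisfies $\deg(\gcd(f_v,x^n-1))>1$, which is automatically true whenever $V$ is a proper sum of eigenlines of $A$ --- it does nothing to rule such sums out. Your remaining sketch (``a rank-$(n-1)$ collapse is incompatible with $V$ being a sum of eigenspaces'') is a hope, not an argument, and you have not indicated any mechanism by which the single collapse interacts with the eigenstructure to derive a contradiction.

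The paper bypasses all of this with a direct idea you are missing. Given non-zero $v\in V$, take a \emph{shortest} word $u$ with $v\cdot u=0$ (exists since $\A$ is synchronizing) and write $u=yx$ with $x\in\Sigma$. Weak defectiveness forces $\text{rk}(x)\in\{n-1,n\}$; rank $n$ is impossible since then $v\cdot y=0$ contradicts minimality. Hence $x$ collapses exactly one pair $\{q_i,q_j\}$, so $\ker(x)$ on $\omega^\perp$ is the line $\langle q_i-q_j\rangle$, and $0\neq v\cdot y\in V$ is a scalar multiple of $q_i-q_j$. Thus $q_i-q_j\in V$. Simplicity then says the congruence generated by $(q_i,q_j)$ is $\nabla_\A$, so chains of images of this pair reach every pair $(q_s,q_t)$, putting every $q_s-q_t$ into $V$ and forcing $V=\omega^\perp$. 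This argument uses neither the circulant machinery nor Theorem~\ref{teo: bound red 1dim}, and in fact nowhere uses circularity; it proves the theorem for arbitrary simple weakly defective synchronizing automata.
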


\begin{proof}
    Let $v\in\omega^\perp\setminus\{0\}$ and let $V=\langle v\cdot \A^*\rangle$ be the subspace of $\omega^\perp$ generated by the action of $\A^*$. We clearly have that $V$ is invariant under the action of $\A^*$. Being $\A$ synchronizing, we must have a minimal-length word $u\in\Sigma^*$ such that $v\cdot u = 0$. Being $\A$ weakly defective, we can write $u=yx$ with $x$ letter of defectivity one.  Now, assume that $q_i\cdot x = q_j\cdot x$, hence the kernel of the matrix associated to $x$ can be written as:
    $$\ker(x) = \langle (q_i-q_1)-(q_j-q_1)\rangle.$$
    Observe that $v\cdot y \in\ker(x)$. Being $u$ minimal, we must have $v\cdot y \neq 0 \Rightarrow v\cdot y = k((q_i-q_1)-(q_j-q_1))$, with $k\in\mathbb{C}\setminus\{0\}$: this implies that $(q_i-q_1)-(q_j-q_1) \in V$. Being $\A$ simple, we must have $(q_s-q_1)-(q_k-q_1)\in V$ for every $s,k\in\{1,\ldots,n\}$: indeed, observe that if $\sigma = \langle\{q_i,q_j\}\rangle $, being $\A$ simple we must have that it is equal to $\nabla_\A$. This means that there exist a chain of indices in $\{i,j\}\cdot \A^*$ such that, if closed under transitivity, gives $\{s,k\}$. In other words, there exist $q_{i_1},\ldots,q_{i_m}$ such that:
    $$\{q_s,q_{i_1}\},\{q_{i_1},q_{i_2}\},\ldots,\{q_{i_{m-1}},q_{i_{m}}\},\{q_{i_{m}},q_k\} \ \text{ with } \ \{q_{i_t},q_{i_{t+1}}\} = \{q_i,q_j\}\cdot u_t \ \text{ for some } \ u_t\in\Sigma^*$$
    Being $V$ invariant under the action of $\A$, this gives us that:
    $$\pm(q_s- q_{i_1}) = \pm((q_s-q_1) - (q_{i_1}-q_1)), \ldots, \pm(q_{i_m}- q_k) = \pm((q_{i_m}-q_1) - (q_k-q_1)) \in V$$
    and thus, by summing everything (up to adjusting the coefficients) we obtain $(q_s-q_1)-(q_k-q_1)\in V$. This clearly proves that $V=\omega^\perp$, giving that $\A$ has to be irreducible.
\end{proof}

One might expect that, for a simple synchronizing automaton, weak defectivity is equivalent to irreducibility. However, the following proposition gives another sufficient condition (in the circular case) for the irreducibility, and in what follows we are able to use it to construct an infinite family of counterexamples (i.e. non-weakly defective, irreducible automata).

\begin{teo}\label{teo: contr irr non wd}
    Let $\A$ be a contracting automaton and assume that there is a word $u\in\Sigma^*$ such that $\ker(u) = \{q_i,\ Q\setminus\{q_i\}\}$ for some $q_i\in Q$. Then we have that $\A$ is irreducible. 
\end{teo}

\begin{proof}
    Let $V\subseteq \mathbb{C}^{n-1}$ be an invariant subspace under the action of $\A^*$ and $v=(v_2,\ldots,v_n)\in V$ a non-zero vector. Observe that, applying a suitable power of $a$ on $v$, we can assume that $v_i\neq 0$. Observe now that, for the $u\in\Sigma^*$ given by the hypothesis, we have: $Q\cdot u = \{q_j,q_k\}$ and by applying the contracting property together with a suitable power of $a$, we obtain an $s\in\Sigma^*$ such that $Q\cdot us = \{q_1,q_2\}$. In the coordinates of $\mathbb{C}^n$ we may write $v= (-\sum_iv_i)q_1 + v_2q_2 + \ldots + v_nq_n$ and observe that $v\cdot u$ has to be non zero by means of the hypothesis on $u$: indeed, we have that 
    $$v\cdot u= (-\sum_iv_i)q_1\cdot u + v_2q_2\cdot u + \ldots + v_nq_n\cdot u$$
    and by rewriting $v\cdot u$ in the basis  of $\mathbb{C}^n$, we must have that the coordinate associated to $q_i\cdot u$ has to be non-zero, hence $v\cdot u$ is non zero as an element of $\omega^\perp$.
    Thus we must have in standard basis $v\cdot us = v_i(-1,1,,0,\ldots,0)$.
    From this it follows that the vector $e=(-1,1,,0,\ldots,0) \in V$. Hence $f_{e} = x-1$ and we can conclude by means of Lemma \ref{lemma: span gcd}.
\end{proof}

As mentioned above, we conclude this section with the following example, providing an infinite class of irreducible, non-weakly defective automata:

\begin{ese}[Irreducible, non-weakly defective automata]\label{ese: irr non-w.-con.}
    Let $\A$ be the following automaton:
    \begin{center}\begin{tikzpicture}[shorten >=1pt,node distance=3cm,on grid,auto] 
        \node[state] (q_1)   {$q_1$}; 
        \node[state] (q_2) [right=of q_1] {$q_2$}; 
        \node[state] (q_3) [below=of q_2] {$q_3$}; 
        \node[state] (q_4) [below=of q_1] {$q_4$};
        \path[->] 
            (q_1) edge node {a} (q_2)
                  edge [loop left] node {b}  ()
            (q_2) edge node {a} (q_3)
                  edge [loop right] node {b} ()
            (q_3) edge node {a} (q_4)
                  edge node {b} (q_1)
            (q_4) edge node {a,b} (q_1);
    \end{tikzpicture}\end{center}
    Observe that it is synchronizing by means of the word $ba^2b\in\Syn(\A)$. Observe that it is contracting: the only two pairs of states of distance $>1$ are $\{q_1,q_3\}$ and $\{q_2,q_4\}$ and $d(q_1\cdot ab,q_3\cdot ab) = 1$, $d(q_2\cdot b,q_4\cdot b)=1$ which proves the contracting property for $\A$. Observe now that $b\in\Sigma$ has rank equal to $2$ with $q_1\cdot b = q_2\cdot b = q_3\cdot b = q_1$ and $q_4\cdot b = q_4$, proving the hypothesis of the previous theorem, showing that $\A$ is irreducible. The construction can be generalized as follows: let $n\in\mathbb{Z}_{\geq 5}$, $\A_n$ be the automaton having $Q=\{q_1,\ldots,q_n\}$ as set of $n-$states, $\Sigma=\{a,b\}$ as alphabet with $a$ the circulating letter (as usual, such that $q_i\cdot a=q_{i+1}$) and $b$ such that $q_n\cdot b = q_{n-1}\cdot b = q_1$ and $q_i\cdot b = q_i$ whenever $i\neq n,n-1$; it is clear that $\A_n$ is not weakly defective. Let us show that $\A_n$ is contracting: let $q_i,q_j\in Q$ with $d(q_i,q_j)>1$ and $i<j$. By means of Remark \ref{oss: d = min}, we know that $d(q_i,q_j)=\min\{j-i,n-j+i\}$: let us consider the two cases
    \begin{itemize}
        \item if $d(q_i,q_j)=j-i$, observe that:
        \[\begin{split}
            1\leq d(q_i\cdot a^{n-i}b,q_j\cdot a^{n-i}b) &= d(q_1,q_j\cdot a^{n-i}) \\
            & =\min\{n-j+i-1,j-i-1\} \\
            & = j - i - 1 \\
            & < d(q_i,q_j)
        \end{split}\]
        which concludes.
        \item if $d(q_i,q_j)=n-j+i$, observe that:
        \[\begin{split}
            1\leq d(q_i\cdot a^{n-j-1}b,q_j\cdot a^{n-j-1}b) &= d(q_i\cdot a^{n-j-1},q_1) \\
            & =\min\{n-j-2+i,j-i+2\} \\
            & = n-j-2+i \\
            & < d(q_i,q_j)
        \end{split}\]
        which again concludes.
    \end{itemize}
    Let us now prove that $\A_n$ is synchronizing and satisfies the hypothesis of Theorem \ref{teo: contr irr non wd}: we clearly have that $(ba^{n-2})^{\lfloor n/2\rfloor}b$ is synchronizing, and $q_{n-2}\cdot(ba^{n-2})^{\lfloor n/2\rfloor} \neq q_i \cdot (ba^{n-2})^{\lfloor n/2\rfloor}$ for every $i\neq n-2$. This shows that $\A_n$ has to be irreducible.
\end{ese}

\section{Conclusion and open problems}
Regarding irreducibility, the only concrete results we have obtained so far rely on the contracting property. One might suspect that the class of irreducible automata is contained within the class of contracting automata, but the following result disproves this intuition.

\begin{prop}
    The automaton given in Example \ref{ese: weakly-con non-con} is irreducible.
\end{prop}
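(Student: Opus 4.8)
The plan is to show directly that every non-zero subspace $V\subseteq\omega^\perp\cong\mathbb{C}^{7}$ invariant under $\A^*$ equals the whole space. Since this automaton is \emph{not} contracting (as established in Example~\ref{ese: weakly-con non-con}), Theorem~\ref{teo: contr irr non wd} is unavailable; in fact its hypothesis genuinely fails, because $b$ is the only rank-$2$ word and, its image being $\{q_3,q_6\}$ with both fibres of size $>1$, no state has a $b$-image distinct from all the others. So I would argue on the representation itself and reduce the whole problem to one concrete vector: the goal is to prove that any such $V$ contains $q_3-q_6$, after which Lemma~\ref{lemma: span gcd} finishes the job.

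First I would use the spectrum of the circulating letter. By Lemma~\ref{lemma: eigen A} the matrix $A$ of $a$ has $7$ \emph{distinct} eigenvalues, so over $\mathbb{C}$ any non-zero $A$-invariant subspace contains an eigenvector; in $\mathbb{C}Q$ these are $\chi_k=\sum_{j=1}^{8}\zeta^{jk}q_j$ with $\zeta=e^{2\pi i/8}$ and $k=1,\dots,7$, and $\chi_1,\dots,\chi_7$ span $\omega^\perp$. Next I would compute the action of $b$. Reading $b$ off Example~\ref{ese: weakly-con non-con} gives $b^{-1}(q_3)=\{q_1,q_3,q_5,q_6,q_7\}$ and $b^{-1}(q_6)=\{q_2,q_4,q_8\}$, so
$$\chi_k\cdot b=\Big(\sum_{j\in b^{-1}(3)}\zeta^{jk}\Big)q_3+\Big(\sum_{j\in b^{-1}(6)}\zeta^{jk}\Big)q_6=:\alpha_k q_3+\beta_k q_6.$$
Because $\alpha_k+\beta_k=\sum_{j=1}^{8}\zeta^{jk}=0$ for $k\neq 0$, this collapses to the rank-one form $\chi_k\cdot b=\alpha_k(q_3-q_6)$, where $\beta_k=1+\zeta^{2k}+\zeta^{4k}=1+i^{k}+i^{2k}$ and $\alpha_k=-\beta_k$.

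The step I expect to carry the real weight is the claim that $\alpha_k\neq 0$ for \emph{every} $k\in\{1,\dots,7\}$, and this is not optional: if some $\alpha_{k_0}=0$ then $\chi_{k_0}\cdot b=0$, so $\langle\chi_{k_0}\rangle$ would be a one-dimensional invariant subspace and $\A$ would be reducible. The verification is a finite check, since $i^{k}$ ranges only over $\{i,-1,-i,1\}$: correspondingly $1+i^{k}+i^{2k}$ takes the values $\{i,1,-i,3\}$, none of which vanish (vanishing would force $i^{k}$ to be a primitive cube root of unity, impossible among fourth roots). Granting this, for any non-zero invariant $V$ I pick an eigenvector $\chi_k\in V$ and obtain $q_3-q_6=\alpha_k^{-1}(\chi_k\cdot b)\in V$.

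It remains to see that $q_3-q_6$ generates everything. Writing $v=q_3-q_6$ and forming its associated polynomial gives $f_v(x)=x^2-x^5=x^2(1-x^3)$, whence
$$\gcd(f_v(x),\,x^{8}-1)=\gcd(x^{3}-1,\,x^{8}-1)=x^{\gcd(3,8)}-1=x-1,$$
of degree $1$ (the stray factor $x^{2}$ is a unit modulo $x^8-1$, so the conclusion is insensitive to the indexing convention for $f_v$). By Lemma~\ref{lemma: span gcd} the vectors $v,Av,\dots,A^{7}v$ span $\mathbb{C}^{7}$, and since $v\in V$ with $V$ being $A$-invariant, all of them lie in $V$; hence $V=\omega^\perp$. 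Thus the only non-zero invariant subspace is the whole space and $\A$ is irreducible. The one genuinely delicate point is the simultaneous non-vanishing of all the $\alpha_k$, which is precisely the arithmetic shadow of the fact that $\gcd(3,8)=1$ makes $q_3-q_6$ a cyclic vector for $A$ even though $b$ only has rank $2$.
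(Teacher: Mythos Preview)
Your proof is correct, and both you and the paper ultimately land on the same endgame: show that $q_3-q_6$ (up to sign) lies in any non-zero invariant $V$, then invoke Lemma~\ref{lemma: span gcd} via $\gcd(x^3-1,x^8-1)=x-1$. The routes to that vector, however, are genuinely different. The paper takes an arbitrary $v\in V$, computes $v\cdot b$ as a scalar multiple of $q_6-q_3$, and then disposes of the degenerate case by writing out the eight linear conditions $v\cdot a^k b=0$ ($k=0,\dots,7$) and solving the resulting system by hand to force $v=0$. You instead exploit the spectral structure granted by Lemma~\ref{lemma: eigen A}: since $A$ has seven distinct eigenvalues, any $A$-invariant $V\neq 0$ already contains some eigenvector $\chi_k$, and a single character evaluation gives $\chi_k\cdot b=-\beta_k(q_3-q_6)$ with $\beta_k=1+i^k+(-1)^k\in\{i,1,-i,3\}$ never zero. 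This replaces the paper's linear-system solve by a four-case check and makes explicit the real obstruction---that no eigenline of $A$ is killed by $b$---which is precisely what rules out one-dimensional invariant subspaces. (A cosmetic remark: under the paper's indexing convention $f_v$ comes out as $x-x^4$ rather than your $x^2-x^5$, but as you note the extra factor of $x$ is a unit modulo $x^8-1$, so the $\gcd$ computation is unaffected.)
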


\begin{proof}
    Let $V\subseteq\mathbb{C}^7$ be invariant under the action of $\A^*$ and $v=(v_2,\ldots,v_8)\in V\setminus\{0\}$. Observe that:
    $$v\cdot b=(0,-v_2-v_4-v_8,0,0,v_2+v_4+v_8,0,0)$$
    and assume that $v_2+v_4+v_8 \neq 0$: this gives us that $(0,-1,0,0,1,0,0)\in V$. Observe that here $f_v=x^4-x = x(x^3-1)$ and it is straightforward to check that $\gcd(f_v,x^8-1) = x-1$, showing us that $V=\mathbb{C}^{n-1}$ by Lemma~\ref{lemma: span gcd}. Assume now that $v_2+v_4+v_8 = 0$: if $\exists k\geq 1$ such that $v\cdot a^kb \neq 0$, we can conclude as if $v_2+v_4+v_8 \neq 0$. Otherwise, assume that $v\cdot a^kb = 0$ for any $k\geq 0$: we have the following system:
    \[\begin{cases}
        v_2 + v_4 + v_8 = 0 \\
        v_1 + v_3 + v_5 = 0 \\
        v_2 + v_4 + v_6 = 0 \\
        v_3 + v_5 + v_7 = 0 \\
        v_4 + v_6 + v_8 = 0 \\
        v_1 + v_5 + v_7 = 0 \\
        v_2 + v_6 + v_8 = 0 \\
        v_1 + v_3 + v_7 = 0 \\
    \end{cases}\]
    where $v_1 = -\sum_{i}v_i$. From this system it follows thanks to some easy calculations that $v=0$ and thus we have a contradiction.
\end{proof}

It is important to notice that all the examples we manage to give of irreducible automata have former-rank $2$, that is, they all admit a word of rank $2$. This brings us to state the following conjecture:

\begin{conj}
    Let $\A$ be an irreducible automaton. Then $\exists u\in\Sigma^*$ with $\text{rk}(u)=2$.
\end{conj}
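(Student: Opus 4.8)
The plan is to translate the statement into representation theory and reduce it to a single question about the \emph{apex} of the irreducible representation $\rho$. The starting point is a rank dictionary: for every $u\in\Sigma^*$ one has $\text{rk}(\rho(u))=\text{rk}(u)-1$, where on the left $\text{rk}$ is the rank of the matrix $\rho(u)\in\matn$ and on the right it is the transformation rank $|Q\cdot u|$. Indeed, the kernel of $u$ acting on $\mathbb{C}Q$ is spanned by the differences $q_i-q_{i'}$ with $q_i\cdot u=q_{i'}\cdot u$, has dimension $n-\text{rk}(u)$, and lies entirely inside $\omega^\perp$; hence $\ker\rho(u)=\ker(u)$ and $\text{rk}(\rho(u))=(n-1)-(n-\text{rk}(u))=\text{rk}(u)-1$. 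In particular a word of rank $2$ is exactly a word whose matrix has rank $1$, so the conjecture is equivalent to the clean assertion: \emph{if $\rho$ is irreducible, then $\rho(\A^*)$ contains a rank-one matrix}. Note that if the minimal rank of a non-synchronizing word already equals $2$ we are done trivially, so the whole content is to rule out minimal rank $\geq 3$.

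First I would identify the apex. Let $r\geq 2$ be the least rank attained by a non-synchronizing word, so every word has rank $1$ or rank $\geq r$. Each rank-$r$ word is nonzero in $\A^*$ and, by the dictionary, has $\rho$ of rank $r-1\geq 1$, hence is not annihilated by $\rho$; on the other hand every word of rank $<r$ is synchronizing, hence $0$ in $\A^*$ and killed by $\rho$. For an irreducible representation the non-annihilated elements form the $\mathcal{J}$-filter generated by a unique minimal $\mathcal{J}$-class (its apex), so all rank-$r$ words must lie in a single $\mathcal{J}$-class $J$, which is the apex of $\rho$. Set $I:=J\cup\{0\}$, a $0$-minimal ideal of $\A^*$.

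Next I would pin down the inducing data. By Burnside's theorem, irreducibility gives $\text{span}_{\mathbb{C}}\,\rho(\A^*)=\matn$. Since $I$ is a two-sided ideal of $\A^*$, the subspace $\text{span}_{\mathbb{C}}\,\rho(I)$ is a nonzero two-sided ideal of $\matn$, and as $\matn$ is simple we get $\text{span}_{\mathbb{C}}\,\rho(I)=\matn$. In particular this span has nonzero multiplication, so $J$ cannot be null ($J^2\neq\{0\}$); being $0$-minimal it is therefore regular and $I$ is completely $0$-simple, while $\mathbb{C}^{n-1}$ is already an irreducible module for $I$. By the Clifford--Munn--Ponizovskii description of the irreducible representations of a completely $0$-simple semigroup, this module is induced from an irreducible representation $\psi$ of the maximal subgroup $G_J$; moreover, choosing an idempotent $e\in J$, the operator $\rho(e)$ is a projection whose image is a copy of the space of $\psi$, so that $\dim\psi=\text{rk}(\rho(e))=\text{rk}(e)-1=r-1$.

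At this point the conjecture is equivalent to the single assertion $\dim\psi=1$, i.e. $r=2$, and this is exactly where I expect the real difficulty to lie. Nothing established so far excludes an irreducible synchronized representation whose apex sits at rank $r\geq 3$ and is induced from a genuinely higher-dimensional $\psi$; the content of the conjecture is precisely that the standard module $\omega^\perp$ of a (circular) automaton never realises such an apex. The reduction above is valid for any synchronizing automaton, and the circular hypothesis should enter only here: the natural tool is the simple spectrum of the circulating matrix $A$ (Lemma~\ref{lemma: eigen A}). Since $A$ has $n-1$ distinct eigenvalues, its centraliser in $\matn$ is the commutative algebra of polynomials in $A$, which should strongly constrain the maximal subgroup $G_J$ and the character it contributes, the aim being to force $G_J$ to act through a one-dimensional representation on $\text{Im}\,\rho(e)$ and thereby collapse $r-1$ to $1$. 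Making this precise---ruling out synchronized representations induced from a $\psi$ of dimension $>1$---is the crux that keeps the statement at the level of a conjecture.
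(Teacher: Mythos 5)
This statement is stated in the paper as a \emph{conjecture}, with no proof offered, so there is no argument of the authors to compare yours against; the only question is whether your proposal actually settles it. It does not, and you say so yourself: everything you establish is a reduction, and the step you label ``the crux'' is left open. To be precise about what is and is not achieved: the rank dictionary $\operatorname{rk}(\rho(u))=\operatorname{rk}(u)-1$ is correct (the kernel of $u$ on $\mathbb{C}Q$ is spanned by differences of states identified by $u$, hence lies in $\omega^\perp$), so the conjecture is indeed equivalent to the image $\rho(\A^*)$ containing a rank-one matrix. The apex analysis is essentially sound: since $\rho(u)=0$ exactly when $u\in\Syn(\A)$, the annihilated elements are precisely the zero of $\A^*$, the apex $J$ is a regular $\mathcal{J}$-class whose elements have the minimal non-synchronizing rank $r$, and for an idempotent $e\in J$ the inducing module has dimension $\operatorname{rk}(\rho(e))=r-1$. (One small imprecision: you assert that \emph{all} rank-$r$ words lie in $J$; being non-annihilated only places them $\mathcal{J}$-above the apex, and rank classes are coarser than $\mathcal{J}$-classes in a submonoid of the full transformation monoid. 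This does not affect the reduction, since all you need is that $J$ itself sits at rank $r$.)

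The genuine gap is the final step, and the tool you propose for it is not adequate as stated. You hope that the simplicity of the spectrum of the circulant-letter matrix $A$ (Lemma~\ref{lemma: eigen A}) forces the maximal subgroup $G_J$ to act one-dimensionally on $\operatorname{Im}\rho(e)$, via the fact that the centralizer of $A$ in $\matn$ is commutative. But $G_J$ lives at rank $r<n$, its elements do not commute with $A$ (which has rank $n$), and $\operatorname{Im}\rho(e)$ is not an $A$-invariant subspace, so the commutativity of the centralizer of $A$ says nothing directly about the representation $\psi$ of $G_J$. Ruling out an irreducible synchronized representation induced from a $\psi$ of dimension greater than one would require a new idea; as it stands your text is a clean reformulation of the conjecture in the language of Clifford--Munn--Ponizovskii theory, which may well be the right framework, but it is not a proof and should not be presented as one.
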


Still, in Example \ref{ese:simp non irr} we give a full class of simple, non-irreducible, former-rank $2$ automata. These factors combined suggest that we may have to consider other properties in order to find a full characterization of this class.
A second limitation of our current theory pertains to the reducibility results and examples we present: in every case, the action of the reducible automaton leaves only $1$-dimensional subspaces of $\omega^\perp\cong\mathbb{C}^{n-1}$ invariant. This leads us to state the following open problem:

\begin{opbm}
    Is there any example of reducible (circular) automaton $\A$ such that for all invariant subspaces $V\subseteq \mathbb{C}^{n-1}$ we have $\dim V\ge 2$? 
\end{opbm}

Observe that, if the answer to this question is negative (i.e., if every reducible automaton is invariant on a $1$-dimensional subspace), Theorem \ref{teo: bound red 1dim} would immediately yield a strict upper bound on the ranks of the letters of the alphabet. This would suggest that such automata (at least in the circular case) must be \emph{fastly synchronizing}. We conjecture that the aforementioned rank result can be straightforwardly generalized to the non-circular setting. \\
We conclude by analyzing Conjecture \ref{conj: extr are irred}. Observe that, by leveraging Theorem \ref{teo: Rystov}, we could address this problem by proving that every extremal automaton is simple and weakly defective (a property true for all known extremal automata). Furthermore, if the \v{C}ern\'y conjecture holds, it is highly plausible that the length of the shortest synchronizing word is inversely proportional to the defectivity of the alphabet's letters. Note that the only irreducible but non-weakly defective example we can provide (namely Example \ref{ese: irr non-w.-con.}) is far from being extremal. Therefore, we can combine \cite[Conjecture 7.21]{RodVen} with the following hypothesis (if proven) to obtain a solution to Conjecture \ref{conj: extr are irred}:
\begin{conj}
    Let $\A$ be an extremal automaton. Then $\A$ is weakly defective.
\end{conj}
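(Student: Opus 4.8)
The plan is to argue by contraposition. Suppose $\A$ is \emph{not} weakly defective, so that some letter $x\in\Sigma$ satisfies $\text{rk}(x)\le n-2$, equivalently $\text{df}(x)\ge 2$; the aim is to exhibit a reset word of length strictly less than $(n-1)^2$, which forces the reset threshold below the \v{C}ern\'y value and hence shows that $\A$ is not extremal. The interest in this implication is that, combined with Theorem~\ref{teo: Rystov}, it would reduce the irreducibility of extremal automata in Conjecture~\ref{conj: extr are irred} to their simplicity.

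The backbone is the decomposition of a synchronizing process by image size. For a reset word $w=w_1\cdots w_L$ write $S_i:=Q\cdot(w_1\cdots w_i)$, so that $n=|S_0|\ge|S_1|\ge\cdots\ge|S_L|=1$, and under weak defectivity each letter lowers the image size by at most one. The \v{C}ern\'y value splits exactly as
$$ (n-1)^2 \;=\; \sum_{k=2}^{n}\bigl(2(n-k)+1\bigr), $$
the summand $2(n-k)+1$ being the budget of the reduction from size $k$ to $k-1$ in the optimal schedule realized, for instance, by the \v{C}ern\'y automaton $\mathcal{C}_n$. First I would observe that the letter $x$ performs, at unit cost, the compound reduction from size $n$ directly to size $\le n-2$; since the first two reductions in the ideal schedule cost $1+3=4$, applying $x$ at the outset yields a net saving of at least $3$.

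The genuine obstacle is to prevent this saving from being silently reabsorbed: one must certify that the remaining synchronization, starting from the set $Q\cdot x$ of size $\le n-2$, is completable within the residual budget $(n-1)^2-4$. This is precisely a tight, \v{C}ern\'y-type bound on the cost of synchronizing an $(n-2)$-subset, for which the known general (cubic) upper bounds are far too weak. I therefore expect the argument to close along one of two routes. The conservative route assumes the \v{C}ern\'y conjecture, or an equally tight subset-synchronization bound, after which the saving argument gives the implication at once, as a conditional result. The ambitious route exploits the rigidity forced by extremality itself: if the shortest reset word has length exactly $(n-1)^2$, then in an optimal process \emph{every} reduction must consume its full budget $2(n-k)+1$, so the mere existence of a letter collapsing two pairs simultaneously ought to be incompatible with this tightness. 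Turning the informal ``no shortcuts'' principle into a usable invariant on the reachable subsets $S_i$ is, I expect, the crux of the whole problem, and is plausibly of comparable difficulty to the \v{C}ern\'y conjecture itself.

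In the circular setting of this paper the first step can be made fully explicit: using the cycle letter $a$ one pre-rotates by a suitable power $a^k$ so that $x$ collapses a prescribed pair of states, exactly as in the contracting computations of Example~\ref{ese: irr non-w.-con.} and Example~\ref{ese:simp non irr}. This isolates all the remaining difficulty in the tail estimate above, and suggests that the circular case---where the interaction of $a$ with a high-defect letter is tightly constrained---may be the right place to attempt the ambitious route first.
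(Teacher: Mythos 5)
The statement you are attacking is stated in the paper as a conjecture, with no proof offered: the paper only motivates it heuristically (every known extremal automaton is weakly defective, and together with Theorem~\ref{teo: Rystov} the conjecture would reduce Conjecture~\ref{conj: extr are irred} to the simplicity of extremal automata). So there is no proof of the paper's to compare against, and the only question is whether your proposal actually closes the conjecture. It does not, and you essentially say so yourself: everything hinges on the ``residual budget'' estimate, i.e.\ on the claim that a subset of size $n-2$ can be synchronized by a word of length at most $(n-1)^2-4$. No such bound is known; the best unconditional bounds on reset words (and on words of prescribed rank) are cubic in $n$, and the per-step budget $2(n-k)+1$ for passing from image size $k$ to $k-1$ is not a theorem but a strengthened, subset-wise form of the \v{C}ern\'y conjecture. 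Without it, the observation that a letter of defect $\ge 2$ ``saves at least $3$'' at the first step proves nothing about the length of the shortest reset word: that word is under no obligation to follow a rank-decreasing schedule, and a lower bound of $(n-1)^2$ on its length is perfectly compatible with the existence of a cheap first step. Note also that even the conservative route does not close as stated: assuming the \v{C}ern\'y conjecture gives the upper bound $(n-1)^2$ for the whole automaton, not the bound $(n-1)^2-4$ for the tail starting from $Q\cdot x$; for that you need the subset-synchronization strengthening, which is a strictly stronger unproven hypothesis.

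The ``ambitious route'' has the same defect in sharper form. The assertion that in an extremal automaton every reduction ``must consume its full budget'' is a non sequitur unless one already knows that the total cost is bounded above by the sum of the budgets; extremality is a statement about the minimum over all reset words, and by itself forces nothing about the cost profile of any particular schedule. So what you have is an honest research plan --- the contrapositive framing, the image-size decomposition, and the suggestion to attack the circular case first (where the pre-rotation by $a^k$ is explicit, as in Example~\ref{ese: irr non-w.-con.} and Example~\ref{ese:simp non irr}) are all reasonable --- but it contains no proof, and the gap you would need to fill is, as you note, of the same order of difficulty as the \v{C}ern\'y conjecture itself. If you want to present something rigorous, state the conditional result explicitly (``if every $m$-subset of an $n$-state synchronizing automaton admits a compressing word within the budget $\sum_{k=2}^{m}(2(n-k)+1)$, then every extremal automaton is weakly defective'') and prove only that implication, which is the saving computation you sketch; the unconditional conjecture remains open.
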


\section*{Acknowledgments}

I am grateful to the anonymous referee for its careful reading and valuable feedback, which significantly improved the clarity of the paper. I would also like to thank Prof. Emanuele Rodaro and Mr. Gabriele Reali for their fruitful suggestions and the thoughtful conversations regarding the manuscript.

\printbibliography[title={References}]
	
\end{document}